\DeclareMathOperator*{\essinf}{ess\,inf}
\newcommand{\beao}{\begin{eqnarray*}}
\newcommand{\eeao}{\end{eqnarray*}\noindent}
\newcommand{\beam}{\begin{eqnarray}}
\newcommand{\eeam}{\end{eqnarray}\noindent}
\def\bbr{{\Bbb R}}   
\def\bbn{{\Bbb N}}
\newcommand{\eps}{{\varepsilon}}
\newcommand{\vt}{\vartheta}
\newcommand{\wh}{\widehat}
\newcommand{\wt}{\widetilde}
\theoremstyle{theorem}
\newtheorem{Theorem}{Theorem}[section]
\newtheorem{Lemma}[Theorem]{Lemma}
\newtheorem{Corollary}[Theorem]{Corollary}
\newtheorem{Proposition}[Theorem]{Proposition}
\newtheorem{definition}[Theorem]{Definition}
\newtheorem{Remark}[Theorem]{Remark}
\newtheorem{Example}[Theorem]{Example}
\numberwithin{equation}{section}
\title{Prospective strict no-arbitrage and the fundamental theorem of asset pricing under transaction costs\thanks{We would like to thank the editor, Prof. Martin Schweizer, an anonymous associate editor, and two anonymous referees for their valuable comments and suggestions from which the manuscript greatly benefited.}}
\author{Christoph Kühn\thanks{Institut für Mathematik, Goethe-Universität Frankfurt, D-60054 Frankfurt a.M., Germany, e-mail: \{ckuehn,
			molitor\}@math.uni-frankfurt.de}\and Alexander Molitor\footnotemark[2]}
\date{}
\begin{document}
	\maketitle
\begin{abstract}
	In discrete time markets with proportional transaction costs, Schachermayer~\cite{schachermayer2004fundamental} shows that {\em robust no-arbitrage} is equivalent to the existence of a strictly consistent price system.

	In this paper, we introduce the concept of {\em prospective strict
	no-arbitrage} that is a variant of the {\em strict no-arbitrage} property from Kabanov, R\'{a}sonyi, and Stricker~\cite{Kabanov2002}. The prospective strict no-arbitrage condition is slightly weaker than robust no-arbitrage, and it implies that the set of portfolios attainable from zero initial endowment is closed in probability. A weak version of 
prospective strict no-arbitrage	turns out to be equivalent to the existence of a consistent price system. In contrast to the fundamental theorem of asset pricing of  Schachermayer~\cite{schachermayer2004fundamental}, the consistent frictionless prices may lie on the boundary of the bid-ask spread.

On the technical level,	a crucial difference to Schachermayer~\cite{schachermayer2004fundamental} and Kabanov-R\'{a}sonyi-Stricker~\cite{kabanov2003closedness} is that we prove closedness without having at hand that the null-strategies form a linear space.
\end{abstract}

\begin{tabbing}
{\footnotesize Keywords:} proportional transaction costs, arbitrage, fundamental theorem of asset pricing\\ 

{\footnotesize JEL classification: G11, G12} \\
 
 
{\footnotesize Mathematics Subject Classification (2010): 60G42, 91G10} 


\end{tabbing}

\section{Introduction}

In frictionless finite discrete time financial market models, the absence of arbitrage opportunities is equivalent to the existence of an equivalent probability measure under which the discounted price processes are martingales. This result is called the fundamental theorem of asset pricing~(FTAP). In the case of a finite probability space, it goes back to the work of Harrison and 
Pliska~\cite{harrison1981martingales}. The extension to arbitrary probability spaces is known as the Dalang-Morton-Willinger Theorem \cite{dalang1990equivalent}, whose original proof was subsequently refined by several authors, see, e.g., \cite{schachermayer1992hilbert}, \cite{kabanov2001teacher}. In the later proofs, the implication that, in frictionless markets, the absence of arbitrage opportunities implies 
that the set of hedgeable claims attainable from zero endowment 
is closed in probability is identified as the key lemma. 

For a finite probability space, Kabanov and Stricker~\cite{kabanov2001harrison} extend the FTAP of Harrison and Pliska to models with  proportional transaction costs. 
They consider a general ``currency model'' with finitely many currencies (assets), which we also follow in the current paper. It allows to buy any asset by paying with any other asset. In this general framework, there need not exist an asset which can play the role of a bank account, i.e., an asset which can be involved in every transaction at minimal costs.
Kabanov and Stricker show that no-arbitrage~(NA) is equivalent to the existence
of a so-called consistent price system~(CPS), which is a multidimensional martingale under the objective probability measure taking values within the dual of the cone of solvent portfolios at each point in time. For infinite probability spaces, this equivalence fails:
Schachermayer~\cite{schachermayer2004fundamental} provides  
an example for an arbitrage-free market which allows for an approximate arbitrage and consequently a CPS cannot exist (see Example~3.1 therein). There arises the obvious question under which stronger no-arbitrage conditions the existence of a CPS can be guaranteed.
Schachermayer~\cite{schachermayer2004fundamental} introduces the  concept of robust no-arbitrage~(NA\textsuperscript{r}) -- a no-arbitrage condition which is robust with respect to small changes in the bid-ask spreads. Loosely speaking, if the bid-ask spread 
(of a pair of assets) 
does not vanish, there have to exist more favorable bid-ask prices, leading to a smaller spread, such that the modified market still satisfies (NA).
Schachermayer shows that (NA\textsuperscript{r}) implies that 
the set of hedgeable claims attainable from zero endowment, in the following denoted by $\mathcal{A}$, is closed in probability, and (NA\textsuperscript{r}) is equivalent to the existence of a strictly consistent price system~(SCPS), that is a martingale taking values within the relative interior of the dual of the cone of solvent portfolios at each point in time.

An alternative condition is the strict no-arbitrage~(NA\textsuperscript{s})~property introduced by Kabanov, R\'{a}sonyi, and Stricker~\cite{Kabanov2002}. Loosely speaking, a market model satisfies 
(NA\textsuperscript{s}) iff any claim which is attainable from zero endowment up to some intermediate time~$t$
and which can be liquidated in $t$ for sure, can also be attained from zero endowment by trading at time~$t$ only. (NA\textsuperscript{s}) alone does not imply the existence of a CPS (see Example~3.3 in \cite{schachermayer2004fundamental} for the existence of an approximate arbitrage under (NA\textsuperscript{s})), but together with the Penner-condition this implication holds
(see Penner~\cite{Penner} and Theorem~2 of Kabanov-R\'{a}sonyi-Stricker~\cite{kabanov2003closedness}). Loosely speaking, the Penner-condition postulates that any 
``free-round-trip'' of exchanging assets that can be carried out in the next period for sure --
given the information of the current period -- can already be carried out in the current period. Together with 
(NA\textsuperscript{s}), it allows to show that the so-called 
null-strategies, i.e., the increments of self-financing portfolio processes with vanishing terminal value,
form a linear space. This is also a crucial argument in \cite{schachermayer2004fundamental} to show closedness of $\mathcal{A}$, which is the main step to show the existence of a CPS. 
Indeed, it is shown by Rokhlin~\cite{rokhlin2008constructive} that the 
vector space property of null-strategies is equivalent to (NA\textsuperscript{r}).

A different approach to study 
the occurrence of an approximate arbitrage
is followed  in Jacka, Berkaoui, and Warren~\cite{jacka2008no}. They provide a necessary and sufficient condition for $\mathcal{A}$ to be closed in probability and construct adjusted trading prices such that the corresponding cone of hedgeable claims attainable from zero endowment either contains an arbitrage or corresponds to the closure of $\mathcal{A}$. Put differently, they 
postulate the (weak) no-arbitrage condition for an adjusted trading model instead of postulating a stronger no-arbitrage condition for the original one (cf. also Remark~\ref{26.11.2018.1} below).

Furthermore, it is important to note that the closedness of 
$\mathcal{A}$ is not necessary for the existence of a CPS.
In the case of only two assets 
(e.g., a bank account and one risky stock), it is shown by 
Grigoriev~\cite{grigoriev2005low} that (NA) already implies the existence of a 
CPS -- although $\mathcal{A}$ need not be closed (see Example~1.3 in \cite{grigoriev2005low} and Proposition~3.5 in L\'epinette and Zhao~\cite{jun.lepinette}
for the non-closedness of the set of attainable liquidation values).
This means that already in dimension two, additional conditions are required to guarantee that the set of attainable liquidation values is closed. For this,
L\'epinette and Zhao~\cite{jun.lepinette} provide an intuitive and easy to verify condition~(Condition~E) that takes the postponing of trades into consideration. Their proof uses the existence of a CPS that is guaranteed by Grigoriev~\cite{grigoriev2005low} in the case of an arbitrage-free model with two assets. 
On the other hand, already for three assets, there is a counterexample showing that (NA) does not imply the existence of a CPS (see Example~4.6 in \cite{kuhn2018local}).
The goal of the current paper is twofold: 
\begin{itemize}
\item We want to provide an (easy to interpret) no-arbitrage condition which is as weak as possible and under which the set~$\mathcal{A}$ of terminal portfolios attainable from zero endowment is closed. 
\item We want to establish a FTAP with CPSs which are not necessarily strict as in the FTAP of Schachermayer~\cite{schachermayer2004fundamental}.  
\end{itemize}
For this, we introduce a variant of (NA\textsuperscript{s}), that we call
\textit{prospective strict no-arbitrage}~(NA\textsuperscript{ps}) and that turns out to be sufficient to guarantee that $\mathcal{A}$ is closed in probability (see Theorem~\ref{theo:MainResult1}). We say that the market model satisfies (NA\textsuperscript{ps}) iff any claim which is attainable from zero endowment by trading up to some time~$t$ and which can {\em subsequently} be liquidated for sure, can also be attained from zero endowment in the {\em subsequent} periods (here, ``subsequent'' is not understood in a strict sense). This means that in contrast to the (NA\textsuperscript{s}) criterion, we do not distinguish between a trade that can be realized at time~$t$ and a trade from which we know at time~$t$ for sure that it can be realized in the future.  In the special case of efficient friction, (NA\textsuperscript{ps}) and (NA\textsuperscript{s}) are equivalent (see Proposition~\ref{prop:EF}).

In our proofs, we cannot rely on the vector space property of the null-strategies, which was central in the arguments of Schachermayer and Kabanov-R\'{a}sonyi-Stricker. Indeed, it was shown by Rokhlin~\cite{rokhlin2008constructive} that this property is equivalent to (NA\textsuperscript{r}), which is strictly stronger than  (NA\textsuperscript{ps}).
Our proof relies on a decomposition of the trading possibilities in ``reversible'' and ``purely non-reversible'' transactions at each point in time, where we call a transaction ``reversible'' if the resulting portfolio can be liquidated in the later periods for sure. 
This decomposition can be seen as a non-linear, only positively homogeneous generalization of the  projection on the set of null-strategies that is
used in the case that the null-strategies form a vector space.
Given a trading strategy, we then consider only the ``purely non-reversible'' part at each point in time and postpone the ``reversible'' part to later points in time, where more information is available. This is possible by (NA\textsuperscript{ps}), and, as it turns out, sufficient to assert that $\mathcal{A}$ is closed in probability. 
Consequently, (NA\textsuperscript{ps}) implies
the existence of a CPS. 

On the other hand, as described above, a CPS can exist although the set~$\mathcal{A}$ is not closed. Consequently, the existence of a CPS cannot be equivalent to (NA\textsuperscript{ps}). 
But, for a weak version of (NA\textsuperscript{ps}), called \textit{weak prospective strict no-arbitrage} (NA\textsuperscript{wps}),  
we have equivalence to the existence of a CPS (see Theorem~\ref{theo:MainResult2}). A market satisfies 
(NA\textsuperscript{wps}) iff there exists an at least as favorable market which satisfies (NA\textsuperscript{ps}). Since the second market need not be strictly more favorable than the original one,
(NA\textsuperscript{ps}) implies (NA\textsuperscript{wps}). 
Hence, we establish a FTAP, which complements those of Schachermayer~\cite{schachermayer2004fundamental} and Kabanov-R\'{a}sonyi-Stricker~\cite{Kabanov2002,kabanov2003closedness}. 
The main difference is that the resulting CPS may lie on the relative boundary of the bid-ask-spread. In Section~\ref{sec:Examples}, Figure~\ref{fig:NApNotNAr}  
illustrates a very simple example for this. Alternatively, 
one may think of an actually frictionless market that is written as a model with efficient friction in the following way. Each point in time is split into two points. Under the same information, at the first point, the investor can only buy, and at the second point she can only sell a stock. If the frictionless market satisfies (NA), the artificial market with friction has a CPS but not a SCPS. Thus, at least from a conceptual point of view, it is desirable to have a FTAP with arbitrary CPSs as well.

In the case of a finite probability space, (NA\textsuperscript{wps})
is equivalent to (NA), which means that our version of the FTAP can be seen as a generalization of the above mentioned FTAP by Kabanov and Stricker~\cite{kabanov2001harrison} (see part~2 of Theorem~1 therein) to the case of arbitrary  probability spaces.
Finally, we motivate the  (NA\textsuperscript{wps}) condition by an example which shows that (NA\textsuperscript{wps}) cannot be replaced by a further weakening
of the (NA\textsuperscript{ps}) condition (see Example~\ref{example:Strengthofthecondition}).

The remainder of the paper is organized as follows. In Section~\ref{3.11.2018.1}, we introduce the framework of financial modeling, the prospective strict no-arbitrage condition, and the weak prospective strict no-arbitrage condition. We  relate these properties to the robust no-arbitrage and the strict no-arbitrage condition and state the main results of the paper  (Theorem~\ref{theo:MainResult1} and Theorem~\ref{theo:MainResult2}). The proofs can be found in Section~\ref{22.11.2018.1}. In Section~\ref{sec:Examples}, there are two very simple examples that
illustrate the differences between the above mentioned no-arbitrage conditions and a more sophisticated example (Example~\ref{example:Strengthofthecondition}) that shows the effect of a possible ``cascade'' of approximate hedges.

\section{Prospective strict no-arbitrage and consistent price systems}\label{3.11.2018.1}
	We now introduce the market model and the relevant notation. We work on a probability space $(\Omega,\mathcal{F},\mathbb{P})$ equipped with a discrete time filtration $(\mathcal{F}_t)_{t=0}^T$, $T\in\bbn$, such that $\mathcal{F}_T=\mathcal{F}$. The space (of equivalence classes) of $\mathcal{F}_t$-measurable $d$-dimensional random vectors is denoted by $L^0(\mathbb{R}^d,\mathcal{F}_t)$. 
	For a set-valued mapping $\omega\mapsto N(\omega)\subseteq \mathbb{R}^d$, we denote by $L^0(N,\mathcal{F}_t):=\{v\in L^0(\mathbb{R}^d,\mathcal{F}_t)\mid v(\omega)\in N(\omega)\ \mathrm{for\ a.e.}\ \omega\in\Omega\}$ the set of $\mathcal{F}_t$-measurable selectors of $N$.
	As usual, the spaces are equipped with the topology of the convergence in probability, and we write $L^0(N):=L^0(N,\mathcal{F}_T)$.

We work with the market model with proportional transaction costs from 
Schachermayer~\cite{schachermayer2004fundamental}, where the reader may find a discussion about its economical meaning and its connection to the models of \cite{Kabanov2002} and \cite{kabanov2001harrison}. There are $d\in\bbn$ traded assets, and a $d\times d$-matrix $\Pi=(\pi^{ij})_{1\leq i,j\leq d}$ is called \textit{bid-ask matrix} if
	\begin{enumerate}[(i)]
	\item \label{item:DefBidAskMatrix1} $0<\pi^{ij}<\infty$, for $1\leq i,j\leq d$,
	\item \label{item:DefBidAskMatrix2} $\pi^{ii}=1$, for $1\leq i\leq d$,
	\item \label{item:DefBidAskMatrix3} $\pi^{ij}\leq \pi^{ik}\pi^{kj}$, for $1\leq i,j,k\leq d$.
	\end{enumerate}
	The terms of trade of the $d$ assets are specified by a \textit{bid-ask process} $(\Pi_t)_{t=0}^T$, i.e., an adapted $d\times d$--matrix-valued process such that for each $\omega\in \Omega$ and $t\in\{0,\dots,T\}$, $\Pi_t(\omega)$ is a bid-ask matrix. 
	For each $t\in\{0,\dots,T\}$, the random matrix $\Pi_t=(\pi^{ij}_t)_{1\leq i,j\leq d}$ specifies the exchanges available to the investor at time $t$. More precisely, the entry $\pi^{ij}_t$ denotes the number of units of asset $i$ for which an agent can buy one unit of asset $j$ at time $t$. Therefore, the \textit{set of portfolios attainable at zero endowment at time $t$}, which, in this context, consists of $\mathcal{F}_t$-measurable $\mathbb{R}^d$-valued random variables, is modeled by the convex cone 
	\begin{align}\label{eq:PortfoliosAttainable}
		\left\{\sum_{1\leq i,j\leq d}\lambda^{ij}(e^j-\pi^{ij}_te^i)-r\bigg \vert (\lambda^{ij})_{1\leq i,j\leq d}\in L^0(\mathbb{R}^{d\times d}_+,\mathcal{F}_t),\ r\in L^0(\mathbb{R}^d_+,\mathcal{F}_t)\right\},
	\end{align}
	where $e^i$ denotes the $i$-th unit vector of $\mathbb{R}^d$. This means that each portfolio is the result of an order $\lambda=(\lambda^{ij})_{1\leq i,j\leq d}\in L^0(\mathbb{R}^{d\times d}_+,\mathcal{F}_t)$, where $\lambda^{ij}$ denotes the units of assets $j$ ordered in exchange for asset $i$, and some non-negative amount $r\in L^0(\mathbb{R}^d_+,\mathcal{F}_t)$, which corresponds to the decision of the investor to ``throw away'' some non-negative quantities of each assets. Next, we define for each $\omega\in \Omega$ the polyhedral cone 
	\begin{align*}
	 -K(\Pi_t(\omega)):=\mathrm{cone}\left(\left\{e^j-\pi^{ij}_t(\omega)e^i\right\}_{1\leq i,j\leq d}, \left\{-e^i\right\}_{1\leq i\leq d}\right),
	\end{align*}
	which we abbreviate as $-K_t(\omega):=-K(\Pi_t(\omega))$.
	In Lemma~\ref{Lemma:Representations} below, we briefly verify the intuitively obvious fact that the set given in \eqref{eq:PortfoliosAttainable} coincides with the set $L^0(-K_t,\mathcal{F}_t)$ of $\mathcal{F}_t$-measurable selectors of the set-valued mapping $\omega\mapsto -K_t(\omega)$. We use this equality throughout the paper and refer to $L^0(-K_t,\mathcal{F}_t)$ as the set of portfolios attainable from zero endowment at time~$t$. 

	 \begin{definition}
		 An $\mathbb{R}^d$-valued adapted process $\vartheta=(\vartheta_t)_{t=0}^T$ is called self-financing portfolio process for the bid-ask process $(\Pi_t)_{t=0}^T$ if
		 \begin{align}
			 \vartheta_t-\vartheta_{t-1}\in L^0(-K_t,\mathcal{F}_t)\quad\mbox{for all }t=0,\ldots,T,
		 \end{align}
		 where $\vartheta_{-1}:=0$. Consequently, for each pair $(s,t)$ with $s,t\in\{0,\ldots,T\}$ and $s\le t$, the convex cone of hedgeable claims attainable from zero endowment between $s$ and $t$ is denoted by $\mathcal{A}_s^t$ and is defined to be
		 \begin{align*}
			 \mathcal{A}_s^t:=\sum_{k=s}^{t}L^0(-K_k,\mathcal{F}_k).
		 \end{align*}
For an alternative bid-ask process~$(\wt{\Pi}_t)_{t=0}^T$, the corresponding set is denoted by $\wt{\mathcal{A}}_s^t$, where $-\wt{K}_t(\omega):=-K(\wt{\Pi}_t(\omega))$ for all $\omega\in\Omega$ and $t=0,\dots,T$.
\end{definition}
	 The primary object of interest in this paper is the cone $\mathcal{A}_0^T$ of hedgeable claims attainable from zero endowment between $0$ and $T$. However, we still need the following auxiliary notions. Let $K_t(\omega):=-(-K_t(\omega))$ for each $\omega\in\Omega$, then the convex cone $L^0(K_t,\mathcal{F}_t)$ is called the \textit{set of solvent portfolios at time $t$} and the (polyhedral) cone $K_t(\omega)$ is called the \textit{solvency cone} corresponding to the bid-ask matrix $\Pi_t(\omega)$. Indeed, for each portfolio $v\in L^0(K_t,\mathcal{F}_t)$ the portfolio $-v\in L^0(-K_t,\mathcal{F}_t)$ is attainable at price zero, thus the portfolio $v$ can be liquidated to zero and, consequently, is solvent. Similarly, let $K^0_t(\omega):=K_t(\omega)\cap -K_t(\omega)$ for each $\omega\in \Omega$, then $L^0(K^0_t,\mathcal{F}_t)$ denotes the space of portfolios, which are attainable at zero endowment, and vice versa, are solvent.

Before we introduce our new no-arbitrage condition, we recall the concepts of no-arbitrage
from the literature.
\begin{definition}[compare to \cite{schachermayer2004fundamental} and \cite{kabanov2003closedness}]\leavevmode 
\label{def:NAlit}
	 	\begin{enumerate}[(i)]
	 		\item The bid-ask process $(\Pi_t)_{t=0}^T$ satisfies the no-arbitrage property  (NA) if
	 		\begin{align}
		 		\label{def:NA} \mathcal{A}_0^T\cap L^0(\mathbb{R}^d_+)=\{0\}.
	 		\end{align}
	 		\item The bid-ask process $(\Pi_t)_{t=0}^T$ satisfies the strict no-arbitrage property (NA\textsuperscript{s}) if 
	 		\begin{align}
		 		\label{def:NAs} \mathcal{A}_0^t\cap L^0(K_t,\mathcal{F}_t)\subseteq L^0(K^0_t,\mathcal{F}_t)\quad\mbox{for all }t=0,\ldots,T.
	 		\end{align}
	 		\item The bid-ask process $(\Pi_t)_{t=0}^T$ satisfies the robust no-arbitrage condition (NA\textsuperscript{r}) if there is a bid-ask process $(\widetilde{\Pi}_t)_{t=0}^T$ with smaller bid-ask spreads in the sense that 
\beam\label{9.11.2018.1}
\mbox{the spread $\left[\frac{1}{\widetilde{\pi}_t^{ji}(\omega)},\widetilde{\pi}_t^{ij}(\omega)\right]$ is contained in the relative interior of $\left[\frac{1}{\pi^{ji}_t(\omega)},{\pi_t}^{ij}(\omega)\right]$}
\eeam 		
for all $1\leq i,j\leq d$, $t\in \{0,\dots,T\}$ and almost all $\omega\in \Omega$, 
such that $(\widetilde{\Pi}_t)_{t=0}^T$ satisfies the no-arbitrage condition (NA).
	 	\end{enumerate}
	 \end{definition}
We just note that in the case of vanishing bid-ask spreads, condition (\ref{9.11.2018.1}) is satisfied by the choice of $\wt{\Pi}=\Pi$, i.e., frictionless markets are not excluded.

It is well known that although each of the cones $L^0(-K_t,\mathcal{F}_t)$ is closed with regard to the convergence in probability, the cone $\mathcal{A}_0^T$ may fail to be closed. As already mentioned, neither (NA) nor (NA\textsuperscript{s}) are strong enough to guarantee that $\mathcal{A}_0^T$ is closed (see Examples 3.1 and 3.3 in \cite{schachermayer2004fundamental}). This is in contrast to the frictionless case, where (NA) is sufficient (see, e.g., Theorem 6.9.2 in \cite{delbaen2006mathematics}). In the present context Schachermayer~\cite{schachermayer2004fundamental} showed that the robust no-arbitrage condition (NA\textsuperscript{r}) is strong enough to assure that $\mathcal{A}_0^T$ is closed. We now introduce a slight weakening of (NA\textsuperscript{r})  called prospective strict no-arbitrage (NA\textsuperscript{ps}), which is still sufficient to guarantee that $\mathcal{A}_0^T$ is closed.
	 \begin{definition} \label{def:NoProspectiveArbitrage}
		The bid-ask process $(\Pi_t)_{t=0}^T$ satisfies the prospective strict no-arbitrage property~(NA\textsuperscript{ps}) if 
 		\begin{align*}
	 		\mathcal{A}_0^t\cap(-\mathcal{A}_t^T)\subseteq \mathcal{A}_t^T\quad\mbox{for all }t=0,\ldots,T.
 		\end{align*}
	 \end{definition}
	 \begin{Remark}
	 	The (NA\textsuperscript{ps}) property has the following interpretation: any claim $v\in \mathcal{A}_0^t$ attained by trading up to time~$t$ which can be reduced to the zero portfolio in $t$ or in the subsequent periods, i.e., $-v\in \mathcal{A}_t^T$, has to be attainable by trading between $t$ and $T$ only, i.e., $v\in \mathcal{A}_t^T$. It is a variant of the 
	 	(NA\textsuperscript{s}) condition, that postulates that any claim $v\in \mathcal{A}_0^t$ which can be liquidated at time~$t$, i.e., $-v\in \mathcal{A}_t^t$, has to be attainable at time~$t$ as well, i.e., $v\in \mathcal{A}_t^t$. The only difference is that we do not distinguish between a trade at time $t$ and a trade from which one knows for sure 
	 	at time~$t$	that it can be realized in the future. 
	 	
	 	Put differently, for every $t$, we review the trading up to time~$t$. 
	 	Either one does not gain advantage from the trading 
	 	since the same terminal position can be achieved for sure by starting to trade at $t$. Or, one takes some risk by the trading up to time~$t$ since the position cannot be liquidated for sure in the future.
	 \end{Remark}
	 The conditions (NA\textsuperscript{ps})
	 and (NA\textsuperscript{s}) coincide under efficient friction (see Proposition~\ref{prop:EF}). We can already formulate the first main result of the paper:
	 \begin{Theorem} \label{theo:MainResult1}
		 If the bid-ask process $(\Pi_t)_{t=0}^T$ has the prospective strict no-arbitrage property~(NA\textsuperscript{ps}), then the convex cone $\mathcal{A}_0^T$ is closed with regard to the convergence in probability.
	 \end{Theorem}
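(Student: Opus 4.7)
I would prove Theorem~\ref{theo:MainResult1} by induction on the horizon $T$. The base case $T=0$ is immediate: $\mathcal{A}_0^0=L^0(-K_0,\mathcal{F}_0)$ is the set of $\mathcal{F}_0$-measurable selectors of a closed polyhedral cone and hence closed in probability. For the inductive step, take a sequence $v^n=\sum_{t=0}^T\xi_t^n\to v$ in probability with $\xi_t^n\in L^0(-K_t,\mathcal{F}_t)$ and aim to show $v\in\mathcal{A}_0^T$. The argument splits according to the asymptotic behaviour of $|\xi_0^n|$.

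In the bounded case, where $\liminf_n|\xi_0^n|<\infty$ almost surely, a standard measurable subsequence extraction (in the spirit of Kabanov's lemma) yields $\xi_0^{n_k}\to\xi_0\in L^0(-K_0,\mathcal{F}_0)$ almost surely. The tail $v^{n_k}-\xi_0^{n_k}\in\mathcal{A}_1^T$ then converges to $v-\xi_0$. Since (NA\textsuperscript{ps}) for the full model passes to the sub-model on $\{1,\dots,T\}$ (because $\mathcal{A}_1^t\subseteq\mathcal{A}_0^t$ for every $t$), the inductive hypothesis gives $v-\xi_0\in\mathcal{A}_1^T$ and hence $v\in\mathcal{A}_0^T$.

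The unbounded case, where $\{\liminf_n|\xi_0^n|=\infty\}$ has positive probability, is the heart of the proof. Normalizing by $a^n:=|\xi_0^n|$ and extracting subsequential limits yields a null-strategy $(\eta_0,\dots,\eta_T)$ with $\eta_t\in L^0(-K_t,\mathcal{F}_t)$, $\sum_t\eta_t=0$, and $|\eta_0|=1$ on the event. In Schachermayer's proof under (NA\textsuperscript{r}) one would invoke the vector space property of null-strategies to subtract a multiple of $\eta$ from $\xi^n$ and shrink $|\xi_0^n|$; this tool is unavailable under (NA\textsuperscript{ps}). Instead, I would apply (NA\textsuperscript{ps}) at $t=0$ directly: since $\eta_0\in\mathcal{A}_0^0$ and $-\eta_0=\sum_{t\ge 1}\eta_t\in\mathcal{A}_1^T\subseteq\mathcal{A}_0^T$, the hypothesis yields $\eta_0\in\mathcal{A}_0^T$, i.e., a decomposition $\eta_0=\sum_{t=0}^T\tilde\eta_t$ with $\tilde\eta_t\in L^0(-K_t,\mathcal{F}_t)$. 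Conceptually, the direction responsible for the blow-up is ``reversible'' and can therefore be ``postponed'' from time $0$ to later periods through this decomposition; using it to modify the strategy $\xi^n$ should leave $v^n$ unchanged while reducing the size of the time-$0$ part, bringing us back into the bounded case.

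The main obstacle will be making this postponing step rigorous in the absence of a linear structure on null-strategies. The decomposition of $L^0(-K_0,\mathcal{F}_0)$ into a reversible subcone (those $\xi$ with $-\xi\in\mathcal{A}_1^T$) and a purely non-reversible complement is only positively homogeneous, so one cannot simply subtract a scalar multiple of a fixed null-strategy as in the (NA\textsuperscript{r}) proof. I expect the argument to construct a measurable selector for this decomposition, to carry out the reduction period by period (possibly partitioning $\Omega$ into $\mathcal{F}_0$-measurable pieces on which different subcases apply), and to invoke a descent argument (say on the dimension of the reversible subspace, which is finite-dimensional a.s.) so that the reduction terminates. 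Verifying that (NA\textsuperscript{ps}) is preserved along each modification, and that the purely non-reversible component of the modified strategy is indeed forced to be bounded so that the bounded case applies, is where I expect the bulk of the technical work to lie.
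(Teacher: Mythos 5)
Your overall architecture (induction over the time horizon, splitting into a bounded and an unbounded case, and the idea of decomposing the time-$t$ action into a ``reversible'' part that (NA\textsuperscript{ps}) lets you postpone and a ``purely non-reversible'' remainder) matches the paper's strategy. But the proposal stops exactly where the real work begins, and the plan you sketch for closing the gap is not the one that works. First, a local slip: applying (NA\textsuperscript{ps}) to $\eta_0\in\mathcal{A}_0^0$ with $-\eta_0\in\mathcal{A}_1^T$ should give $\eta_0\in\mathcal{A}_1^T$ (via $\mathcal{A}_0^0\cap(-\mathcal{A}_1^T)\subseteq\mathcal{A}_0^1\cap(-\mathcal{A}_1^T)\subseteq\mathcal{A}_1^T$); the conclusion $\eta_0\in\mathcal{A}_0^T$ that you state is trivially true and useless for postponing anything. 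Second, and more seriously, your proposed descent ``on the dimension of the reversible subspace'' has no footing: under (NA\textsuperscript{ps}) the reversible directions form only a convex cone, not a linear space (Rokhlin's result cited in the paper shows the linear-space property is \emph{equivalent} to (NA\textsuperscript{r})), so there is no dimension to descend on, and subtracting the normalized limit $\eta$ from $\xi^n$ need not produce an element of $L^0(-K_0,\mathcal{F}_0)$ nor leave $v^n$ attainable. You correctly identify this as the main obstacle, but you do not resolve it.

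The paper's resolution is structurally different: it does not repair the sequence after detecting a blow-up, but preprocesses it so that a blow-up cannot occur. Working at the level of orders $\lambda\in L^0(\mathbb{R}^{d\times d}_+,\mathcal{F}_t)$ (not portfolio increments $\xi_t$ --- this matters, see below), it defines $p_t(\lambda)$ as the reversible component whose remainder $q_t(\lambda)=\lambda-p_t(\lambda)$ has a.s.\ minimal Euclidean norm among all decompositions; existence and uniqueness follow from a downward-directedness and parallelogram-law argument, which needs $\mathcal{A}_{t+1}^T$ closed from the induction hypothesis. By (NA\textsuperscript{ps}) the piece $L_t(p_t(\lambda_t^n))$ can be absorbed into $\mathcal{A}_{t+1}^T$, so one may assume $\lambda_t^n=q_t(\lambda_t^n)$ throughout. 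Then, if $\Vert\lambda_t^n\Vert_2\to\infty$ on a set of positive measure, the normalized limit $\widetilde\lambda_t$ is reversible (its image is a limit of elements of $-\mathcal{A}_{t+1}^T$, which is closed) \emph{and} purely non-reversible (because $\mathrm{Image}(q_t)$ is closed, a separate continuity lemma), hence zero --- contradicting $\Vert\widetilde\lambda_t\Vert_2=1$. No descent or iterative modification is needed. The continuity of $\lambda\mapsto p_t(\lambda)$ is proved by compressing $p_t(\lambda)$ with a factor $\mu_n\to 1$ built from the componentwise inequalities $0\le p_t(\lambda)^{ij}\le\lambda^{ij}$, which is precisely why the decomposition must be carried out on orders rather than on $\xi_0^n$ as you propose. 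To complete your proof you would need to supply these two lemmas (existence/uniqueness of the minimal-norm decomposition and its closedness/continuity) or an equivalent substitute; as written, the unbounded case is not established.
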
 
	The theorem above has obvious consequences for the existence of dual variables. 
For a given bid-ask matrix $\Pi$, the (positive) dual cone $K^\star$ of the solvency cone $K=K(\Pi)$ is defined by $K^\star:=\{w\in\mathbb{R}^d: \langle v,w\rangle \geq 0 \ \text{for all}\ v\in K\}$. 
For the bid-ask process~$(\Pi_t)_{t=0}^T$, this induces the set-valued process~$(K^\star_t)_{t=0}^T$ of dual cones. We can now define the notion of \textit{consistent price systems}, which is dual to the notion of self-financing portfolio process and plays a similar role as the notion of an equivalent martingale measure in the frictionless theory. Once again for a detailed discussion of the economical interpretation we refer to \cite{schachermayer2004fundamental}. 
\begin{definition}
	 An adapted $\mathbb{R}^d_+$-valued process $Z=(Z_t)_{t=0}^T$ is called a consistent price system~(CPS) for the bid-ask process  $(\Pi_t)_{t=0}^T$ if $Z$ is a martingale under $\mathbb{P}$ and $Z_t\in L^0(K^\star_t\setminus\{0\},\mathcal{F}_t)$, i.e., $Z_t(\omega)\in K^\star_t(\omega)\setminus\{0\}$ for a.e. $\omega \in \Omega$ and each $t\in\{0,\dots,T\}$.
	 \end{definition}
	 We have the following consequence of Theorem~\ref{theo:MainResult1}.
	 \begin{Corollary} \label{cor:MainResult1}
		 If the bid-ask process $(\Pi_t)_{t=0}^T$ satisfies the prospective strict no-arbitrage condition (NA\textsuperscript{ps}), then it admits a consistent price 
		 system~(CPS). More generally, for any given strictly positive $\mathcal{F}_T$-measurable function $\varphi:\Omega\to(0,1]$, there is a CPS~$Z=(Z_t)_{t=0}^T$ with $\Vert Z_T\Vert_2\leq M\varphi$ a.s. for some $M\in\bbr_+\setminus\{0\}$, where $\Vert \cdot\Vert_2$ denotes the Euclidean norm on $\mathbb{R}^d$.  
	 \end{Corollary}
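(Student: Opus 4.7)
The plan is to derive the corollary from Theorem~\ref{theo:MainResult1} via a weighted Hahn-Banach separation; the strictly positive $\varphi$ will enter as a Radon-Nikodym derivative for a change of measure, directly producing the $\|Z_T\|_2\leq M\varphi$ bound.

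First I would check that (NA\textsuperscript{ps}) implies the plain no-arbitrage property (NA). For $v\in\mathcal{A}_0^T\cap L^0(\mathbb{R}^d_+,\mathcal{F}_T)$, the free-disposal inclusion $-L^0(\mathbb{R}^d_+,\mathcal{F}_T)\subseteq L^0(-K_T,\mathcal{F}_T)=\mathcal{A}_T^T$ gives $-v\in\mathcal{A}_T^T$, whence (NA\textsuperscript{ps}) at $t=T$ forces $v\in L^0(-K_T,\mathcal{F}_T)\cap L^0(\mathbb{R}^d_+,\mathcal{F}_T)$. Axiom~(iii) of the bid-ask matrix guarantees a strictly positive pathwise element $z\in K_T^\star\cap\mathbb{R}^d_{++}$ (take e.g.~$z^i:=1/\pi_T^{i1}$; axiom~(iii) applied with $k=j$ yields $z^j\leq\pi_T^{ij}z^i$ for all $i,j$, which is exactly the defining inequality of $K_T^\star$), and pairing $v\geq 0$ against $z>0$ while $\langle v,z\rangle\leq 0$ forces $v=0$.

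Next, introduce the equivalent probability $\mathbb{Q}$ with $d\mathbb{Q}/d\mathbb{P}=\varphi/\mathbb{E}[\varphi]$. Because convergence in $L^1(\mathbb{Q})$ implies convergence in probability under $\mathbb{Q}$, and hence under the equivalent $\mathbb{P}$, Theorem~\ref{theo:MainResult1} implies that $\mathcal{A}_0^T\cap L^1(\mathbb{Q},\mathbb{R}^d)$ is closed in $L^1(\mathbb{Q})$. This convex cone contains $-L^1(\mathbb{Q},\mathbb{R}^d_+)$ (free disposal) and intersects $L^1(\mathbb{Q},\mathbb{R}^d_+)$ only at the origin, by the preceding step. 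The vectorial Kreps-Yan theorem, combined with the classical exhaustion of separating functionals (as in Schachermayer~\cite{schachermayer2004fundamental}), therefore produces $\widetilde{Z}_T\in L^\infty(\mathbb{Q},\mathbb{R}^d)$ with $\widetilde{Z}_T\neq 0$ $\mathbb{Q}$-a.s.\ and $\mathbb{E}_\mathbb{Q}[\langle\widetilde{Z}_T,v\rangle]\leq 0$ for every $v\in\mathcal{A}_0^T\cap L^1(\mathbb{Q})$. Plugging in the bounded selectors $\mathbf{1}_A(e^j-\pi_T^{ij}e^i)$ with $A\in\mathcal{F}_T$ upgrades this to $\widetilde{Z}_T\in K_T^\star\setminus\{0\}$ a.s.

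Finally, set $Z_t:=\mathbb{E}_\mathbb{P}\!\left[(\varphi/\mathbb{E}[\varphi])\widetilde{Z}_T\mid\mathcal{F}_t\right]$ for $t=0,\dots,T$. By construction $Z$ is a $\mathbb{P}$-martingale with $Z_T=(\varphi/\mathbb{E}[\varphi])\widetilde{Z}_T\in K_T^\star\setminus\{0\}$ and $\|Z_T\|_2\leq M\varphi$ for $M:=\|\widetilde{Z}_T\|_\infty/\mathbb{E}[\varphi]$, as required. For $t<T$ and any bounded $\mathcal{F}_t$-measurable selector $\xi$ of $-K_t$, one has $\xi\in\mathcal{A}_0^T\cap L^1(\mathbb{Q})$, so Bayes's formula gives $\mathbb{E}_\mathbb{P}[\langle Z_T,\xi\rangle]=\mathbb{E}_\mathbb{Q}[\langle\widetilde{Z}_T,\xi\rangle]\leq 0$; conditioning on $\mathcal{F}_t$ and exhausting $\xi$ over a countable generating family of $-K_t$ yields $Z_t\in K_t^\star$ a.s., while $Z_t\neq 0$ follows from $\sum_i Z_t^i=\mathbb{E}_\mathbb{P}[\sum_i Z_T^i\mid\mathcal{F}_t]>0$ a.s. The main technical point is the Kreps-Yan exhaustion step, which produces an a.s.\ non-vanishing separator; this is the classical heart of the FTAP proof, the new contribution being merely that the $L^0$-closedness ingredient is now delivered by Theorem~\ref{theo:MainResult1} under the weaker hypothesis (NA\textsuperscript{ps}).
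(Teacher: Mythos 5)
Your proposal is correct and is essentially the argument the paper invokes: the paper's proof of this corollary simply refers to lines 5--33 on page 29 of the proof of Theorem 2.1 in Schachermayer (2004), applied to $\mathcal{A}_0^T$ (closed by Theorem~\ref{theo:MainResult1}), which is exactly the weighted Kreps--Yan separation under the change of measure $d\mathbb{Q}/d\mathbb{P}=\varphi/\mathbb{E}[\varphi]$ that you carry out. The only cosmetic point is that your test selectors $\mathbf{1}_A(e^j-\pi_T^{ij}e^i)$ require a localization on sets of the form $A\cap\{\pi_T^{ij}\le N\}$, since $\pi_T^{ij}$ is finite-valued but not necessarily bounded; this is standard and does not affect the argument.
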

	 \begin{Remark}
	 	An abstract version of (NA\textsuperscript{ps}) reads:
	 	If a strategy up to time $t$ can be extented to a strategy without losses at $T$, then any other extension beyond $t$ can be dominated at $T$ by a strategy that does not trade before $t$.
	 	
	 	This scheme can be formalized in a quite canonical way in diverse market models including, e.g., capital gains taxes, uncertainty about the execution of limit orders,
	 	or dividend paying assets, where the basic problem from Example~3.1 in Schachermayer~\cite{schachermayer2004fundamental}, can also occur (see, e.g., Example~4.5 in \cite{kuhn2018local}). The arguments of our proofs may be adapted to these models to show that the set of attainable terminal portfolios is closed.
	 	
	 	For example, in the context of optimal investment problems with utility functions on the positive real line, this means, roughly speaking, that the set $\mathcal{C}$ of non-negative random variables dominated by the liquidation value of an attainable portfolio (with a given initial endowment) is also closed in probability. Hence, defining the set of dual variables $\mathcal{D}$ as the polar set of $\mathcal{C}$, the abstract versions of the duality results in Kramkov and Schachermayer~\cite[Theorem 3.1 and 3.2]{kramkov1999} may also be applied to these models. 
	 \end{Remark} 
	The opposite of Corollary~\ref{cor:MainResult1} fails to be true. More generally,
	by Example 1 in Section 3.2.4 of \cite{kabanov2009markets}, there cannot exist a  no-arbitrage criterion that both guarantees closedness of $\mathcal{A}^T_0$ and that is equivalent to the existence of a CPS, cf. also the discussion in Remark~\ref{2.4.2019.1}. We can however establish an equivalence if we pass from (NA\textsuperscript{ps}) to a weaker notion of prospective strict no-arbitrage.
	 \begin{definition}\label{11.11.2018.1}
		 The bid-ask process $(\Pi_t)_{t=0}^T$ satisfies the weak prospective strict no-arbitrage property (NA\textsuperscript{wps}) if there is a bid-ask process $(\widetilde{\Pi}_t)_{t=0}^T$ with $\widetilde{\Pi}_t\leq \Pi_t$ a.s. for all $t=0,\dots,T$, such that $(\widetilde{\Pi}_t)_{t=0}^T$ satisfies the prospective strict no-arbitrage condition~(NA\textsuperscript{ps}).
	 \end{definition}
	 The (NA\textsuperscript{wps}) condition is obviously a weakening of the 
(NA\textsuperscript{ps}) condition since the bid-ask process $(\wt{\Pi}_t)_{t=0}^T$ in Definition~\ref{11.11.2018.1} need not be strictly more favorable than $(\Pi_t)_{t=0}^T$. The difference between the two conditions is illustrated in Example~\ref{example:WPros_Pros} below, see also Remark~\ref{2.4.2019.1}. Our second main result is the following fundamental theorem of asset pricing.
	 \begin{Theorem}\label{theo:MainResult2}
		 A bid-ask process $(\Pi_t)_{t=0}^T$ satisfies the weak prospective strict no-arbitrage condition~(NA\textsuperscript{wps}) if and only if it admits a consistent price system~(CPS). 
	 \end{Theorem}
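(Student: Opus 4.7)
The plan is to prove each implication, with the construction of a frictionless companion process playing the central role.

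\textbf{Sufficiency.} Suppose $(\Pi_t)_{t=0}^T$ satisfies (NA\textsuperscript{wps}) and pick $(\wt{\Pi}_t)_{t=0}^T$ as in Definition~\ref{11.11.2018.1} satisfying (NA\textsuperscript{ps}). By Corollary~\ref{cor:MainResult1}, $(\wt{\Pi}_t)$ admits a CPS $Z$. The pointwise inequality $\wt{\pi}^{ij}_t \leq \pi^{ij}_t$ yields $-K_t \subseteq -\wt{K}_t$ (each generator decomposes as $e^j - \pi^{ij}_t e^i = (e^j - \wt{\pi}^{ij}_t e^i) + (\pi^{ij}_t - \wt{\pi}^{ij}_t)(-e^i) \in -\wt{K}_t$), hence $\wt{K}_t^\star \subseteq K_t^\star$, so $Z_t \in \wt{K}_t^\star \setminus \{0\} \subseteq K_t^\star \setminus \{0\}$ and $Z$ is a CPS for $(\Pi_t)$ as well.

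\textbf{Necessity.} Given a CPS $Z$ for $(\Pi_t)$, a first useful observation is that $Z_t$ is a.s.\ strictly positive componentwise: applying the CPS inequality to $e^j - \pi^{ij}_t e^i \in -K_t$ gives $Z_t^j \leq \pi^{ij}_t Z_t^i$, so $Z_t^i = 0$ would force $Z_t^j = 0$ for every $j$, contradicting $Z_t \neq 0$. I then collapse the spreads to the frictionless prices encoded by $Z$ by setting
\[
\wt{\pi}^{ij}_t \;:=\; \frac{Z_t^j}{Z_t^i},
\]
and verify directly that $\wt{\Pi}_t$ is a (frictionless) bid-ask matrix with $\wt{\Pi}_t \leq \Pi_t$ a.s.\ (the triangle identity being an equality), and that $-\wt{K}_t(\omega) = \{w \in \mathbb{R}^d : \langle Z_t(\omega), w\rangle \leq 0\}$.

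\textbf{(NA\textsuperscript{ps}) for $(\wt{\Pi}_t)$.} Let $v \in \wt{\mathcal{A}}_0^t \cap (-\wt{\mathcal{A}}_t^T)$, and decompose $v = \sum_{s=0}^t \xi_s$ and $-v = \sum_{s=t}^T \eta_s$ with $\xi_s,\eta_s \in L^0(-\wt{K}_s, \mathcal{F}_s)$, so $\langle Z_s,\xi_s\rangle \leq 0$ and $\langle Z_s,\eta_s\rangle \leq 0$ a.s. Since $v$ is $\mathcal{F}_t$-measurable and $Z$ is a martingale,
\[
-\langle Z_t,v\rangle \;=\; E\bigl[\langle Z_T, -v\rangle \,\big|\, \mathcal{F}_t\bigr] \;=\; \sum_{s=t}^T E\bigl[\langle Z_s, \eta_s\rangle \,\big|\, \mathcal{F}_t\bigr] \;\leq\; 0,
\]
giving $\langle Z_t,v\rangle \geq 0$ a.s.; combined with $E[\langle Z_t,v\rangle] = \sum_{s=0}^t E[\langle Z_s,\xi_s\rangle] \leq 0$ (from the $\xi_s$-decomposition and the tower property), one concludes $\langle Z_t,v\rangle = 0$ a.s. Hence $v \in L^0(\wt{K}_t \cap (-\wt{K}_t), \mathcal{F}_t) \subseteq \wt{\mathcal{A}}_t^T$, placing $v$ as the single trade at time $t$. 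The main obstacle is the integrability of $\langle Z_t,\xi_s\rangle$ and $\langle Z_s,\eta_s\rangle$, since the $\mathcal{F}_s$-measurable $\xi_s,\eta_s$ need not be integrable; I would address this by truncating the strategies on the events $\{\|\xi_s\|\leq n\}$ (analogously for $\eta_s$), running the argument on the truncations, and passing to the limit, or alternatively by exploiting the one-sided sign of $\langle Z_s,\xi_s\rangle$ and $\langle Z_s,\eta_s\rangle$ to define the relevant conditional expectations in the generalized sense in $[-\infty,0]$.
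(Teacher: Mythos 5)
Your proof is correct and its overall architecture matches the paper's: the sufficiency direction is exactly the paper's (Corollary~\ref{cor:MainResult1} applied to $\wt{\Pi}$, then $K_t\subseteq\wt{K}_t$ hence $\wt{K}_t^\star\subseteq K_t^\star$), and for necessity you build the same frictionless companion process $\wt{\pi}^{ij}_t=Z_t^j/Z_t^i$ that the paper imports from page~30 of Schachermayer. Where you genuinely diverge is the last step: the paper only checks that the frictionless $\wt{\Pi}$ satisfies (NA) and then invokes the fact that for frictionless markets (NA) implies (NA\textsuperscript{r}) (take $\wt{\Pi}=\Pi$ in \eqref{9.11.2018.1}) and hence (NA\textsuperscript{ps}) by Proposition~\ref{prop:Comparision}, whereas you verify (NA\textsuperscript{ps}) for $\wt{\Pi}$ directly from the martingale property of $Z$. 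Your route is more self-contained (it does not lean on Proposition~\ref{prop:Comparision} or on the remark that frictionless markets are not excluded from (NA\textsuperscript{r})), at the cost of having to handle the integrability of $\langle Z_s,\xi_s\rangle$ and $\langle Z_s,\eta_s\rangle$ yourself. You correctly flag this issue; one caveat on the fix: truncating $\eta_s$ to $\mathbbm{1}_{\{\Vert\eta_s\Vert\le n\}}\eta_s$ alters the terminal value $-v$, and the identity $E[\langle Z_T,\eta_s\rangle\mid\mathcal{F}_s]=\langle Z_s,\eta_s\rangle$ cannot be taken termwise in the generalized sense because $\langle Z_T,\eta_s\rangle$ is not one-signed. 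The clean version is a backward induction on the partial sums $W_u=\sum_{s=t}^{u}\eta_s$: on each $\mathcal{F}_{u-1}$-measurable set $\{\Vert W_{u-1}\Vert\le n\}$ one has $E[\langle Z_u,W_u\rangle\mid\mathcal{F}_{u-1}]\le\langle Z_{u-1},W_{u-1}\rangle$, and exhausting $\Omega$ by such sets yields $\langle Z_t,-v\rangle\le 0$ a.s.; the same localized supermartingale argument applied to $\sum_{s=0}^{u}\xi_s$ gives $\langle Z_t,v\rangle=0$ a.s. With that standard adjustment your argument goes through and proves the theorem.
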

	\begin{Remark}
Theorem~\ref{theo:MainResult2} extends part~2 of Theorem~1 in
Kabanov and Stricker~\cite{kabanov2001harrison} to the case of infinite
probability spaces. Combining these two theorems, it can be seen that 
(NA\textsuperscript{wps}) possesses the nice property 
that it is equivalent to (NA) if $|\Omega|<\infty$.
	\end{Remark}	 
\begin{Remark}
In addition, (NA) and (NA\textsuperscript{wps}) coincide in the case of only two assets on arbitrary probability spaces, which follows from the equivalence of (NA) and the existence of a CPS, derived by Grigoriev~\cite{grigoriev2005low}. 
\end{Remark}
\begin{Remark}\label{2.4.2019.1}
	In the following discussion, we identify an ``absence of arbitrage'' criterion~$\mathcal{C}$ with the set of bid-ask processes which satisfy the criterion
	and call it monotone if for all bid-ask processes $\wt{\Pi}\le \Pi$, $\wt{\Pi}\in\mathcal{C}$ implies that $\Pi\in\mathcal{C}$.
	Monotonicity is obviously satisfied by the simple (NA) condition. 
	The more sophisticated criteria (NA\textsuperscript{s}), (NA\textsuperscript{r}), and (NA\textsuperscript{ps})
	are in general only monotone if bid-ask matrices without efficient friction are excluded from the consideration, i.e., $\pi^{ij}\pi^{ji}\ge  \wt{\pi}^{ij}\wt{\pi}^{ji}>1$ for all $i\not=j$.                
	On the one hand, the equivalence to the existence of a CPS can only hold for a monotone criterion. 
	On the other hand, the closedness of the set of attainable portfolios 
	does not transfer to a market with a less favorable bid-ask process 
	(see, e.g., Examples \cite[Section 3.2.4, Example 1]{kabanov2009markets} and \cite[Example 2.1]{jacka2008no}). Thus, to guarantee closedness, e.g., in the context of optimal investment problems, the limitation to monotone criteria would be unnecessarily restrictive.
	
	The (NA\textsuperscript{wps}) criterion can be characterized as 
	the ``strongest monotone criterion which is weaker than (NA\textsuperscript{ps})'', i.e., it follows directly from Definition~\ref{11.11.2018.1} that 
	\beam\label{6.3.2019.1}
	(NA^{wps}) = \bigcap_{(NA^{ps})\subseteq \mathcal{C},\ \mathcal{C}\mbox{\ is monotone }}\mathcal{C}.
	\eeam
	In the special case of a frictionless market, the criteria (NA\textsuperscript{ps}) and (NA\textsuperscript{wps})
	coincide (see Proposition~\ref{prop:Comparision}). 
	
	We stress that the picture cannot be as clear-cut as in the frictionless case. 
	In discrete time frictionless markets, (NA) already implies closedness (see Schachermayer~\cite{schachermayer1992hilbert}). In continuous time frictionless markets, Delbaen and Schachermayer~\cite{delbaen1994general} derived closedness in the appropriate topology under the economic meaningful assumption of ``no free lunch with vanishing risk'' (NFLVR), that is also necessary for the existence of an equivalent martingale measure. 
	Under transaction costs, the FTAP of  Delbaen and 
	Schachermayer~\cite{delbaen1994general}
	cannot hold. Namely, Example~3.1 in 
	Schachermayer~\cite{schachermayer2004fundamental} satisfies 
	(NFLVR) defined for multivariate portfolio processes, i.e., there does not exist an approximate arbitrage with short positions in any asset
	converging uniformly to zero, but a CPS does nevertheless not exist. 
\end{Remark}
The (NA\textsuperscript{wps}) property is not sufficient to assure that $\mathcal{A}_0^T$ is closed in probability. However, we have the following obvious consequence of Theorem~\ref{theo:MainResult2}.
	\begin{Corollary} \label{cor:MainResult2}
		If a bid-ask process $(\Pi_t)_{t=0}^T$ satisfies the weak prospective strict no-arbitrage condition (NA\textsuperscript{wps}), then we have $\overline{\mathcal{A}^{T}_0}\cap L^0(\mathbb{R}^d_+)=\{0\}$.
	\end{Corollary}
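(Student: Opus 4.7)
The plan is to reduce to the more favourable market $\widetilde{\Pi}\le \Pi$ supplied by Definition~\ref{11.11.2018.1} and apply Theorem~\ref{theo:MainResult1} to it. This route uses Theorem~\ref{theo:MainResult2} only implicitly, via the definition of (NA\textsuperscript{wps}); the alternative would be to extract a CPS from Theorem~\ref{theo:MainResult2} and run the usual martingale/Kreps--Yan duality through the closure, but that forces one to grapple with the (lack of) integrability of the increments $\xi_t$, whereas the present reduction sidesteps the issue.

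The coordinate-wise inequality $\widetilde{\Pi}_t\le \Pi_t$ immediately yields the set inclusion $-K_t\subseteq -\widetilde{K}_t$ almost surely, because each generator $e^j-\pi_t^{ij}e^i$ of $-K_t$ decomposes as
\[
e^j-\pi_t^{ij}e^i=\bigl(e^j-\widetilde{\pi}_t^{ij}e^i\bigr)+(\pi_t^{ij}-\widetilde{\pi}_t^{ij})(-e^i),
\]
a conic combination of two generators of $-\widetilde{K}_t$. Summing over $t$ gives $\mathcal{A}_0^T\subseteq \widetilde{\mathcal{A}}_0^T$. Since $\widetilde{\Pi}$ satisfies (NA\textsuperscript{ps}), Theorem~\ref{theo:MainResult1} applied to $\widetilde{\Pi}$ yields that $\widetilde{\mathcal{A}}_0^T$ is closed in probability, and therefore $\overline{\mathcal{A}_0^T}\subseteq \widetilde{\mathcal{A}}_0^T$. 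The claim reduces to showing that $\widetilde{\Pi}$ satisfies plain no-arbitrage, i.e.\ $\widetilde{\mathcal{A}}_0^T\cap L^0(\mathbb{R}^d_+)=\{0\}$.

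For this I would specialise the (NA\textsuperscript{ps}) property of $\widetilde{\Pi}$ to $t=T$: since $-\widetilde{\mathcal{A}}_T^T=L^0(\widetilde{K}_T,\mathcal{F}_T)$, the condition reads $\widetilde{\mathcal{A}}_0^T\cap L^0(\widetilde{K}_T,\mathcal{F}_T)\subseteq L^0(-\widetilde{K}_T,\mathcal{F}_T)$. Any $v\in \widetilde{\mathcal{A}}_0^T\cap L^0(\mathbb{R}^d_+)$ belongs to $L^0(\widetilde{K}_T)$, because the ability to throw away gives $-e^i\in -\widetilde{K}_T$, hence $e^i\in\widetilde{K}_T$, and then $\mathbb{R}^d_+\subseteq \widetilde{K}_T$. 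Applying (NA\textsuperscript{ps}) then places $v$ in $L^0(-\widetilde{K}_T)$, so it remains to show that $(-\widetilde{K}_T(\omega))\cap \mathbb{R}^d_+=\{0\}$ almost surely.

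This pointwise one-period no-arbitrage lemma is the main (and rather mild) obstacle, since it is the only place where a computation is needed. I would dispatch it via the explicit separator $y^i:=\widetilde{\pi}_T^{1i}(\omega)$, which is strictly positive in every coordinate: the triangle inequality $\widetilde{\pi}_T^{1j}\le \widetilde{\pi}_T^{1i}\widetilde{\pi}_T^{ij}$ yields $\langle y,e^j-\widetilde{\pi}_T^{ij}e^i\rangle\le 0$ and $\langle y,-e^i\rangle<0$, so $\langle y,v\rangle\le 0$ for every $v\in -\widetilde{K}_T(\omega)$; yet $\langle y,v\rangle>0$ for every non-zero $v\in \mathbb{R}^d_+$. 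Combined with the conclusion of the previous paragraph, this forces $v=0$ almost surely, completing the argument.
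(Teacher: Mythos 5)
Your proof is correct, but it takes a genuinely different route from the paper's. The paper deduces the corollary from Theorem~\ref{theo:MainResult2}: it extracts a consistent price system and then invokes Proposition~3.2.6 of \cite{kabanov2009markets}, which uses the CPS as a dual separating object to conclude $\overline{\mathcal{A}_0^T}\cap L^0(K_T,\mathcal{F}_T)\subseteq L^0(\partial K_T,\mathcal{F}_T)$, finishing with the observation $\partial K_T\cap\mathbb{R}^d_+=\{0\}$. You instead stay entirely primal: the inclusion $-K_t\subseteq-\widetilde K_t$ gives $\mathcal{A}_0^T\subseteq\widetilde{\mathcal{A}}_0^T$, Theorem~\ref{theo:MainResult1} applied to $\widetilde\Pi$ gives closedness of $\widetilde{\mathcal{A}}_0^T$ and hence $\overline{\mathcal{A}_0^T}\subseteq\widetilde{\mathcal{A}}_0^T$, and then (NA) for $\widetilde\Pi$ finishes the job. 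Your last two paragraphs are in fact a re-derivation of the implication (NA\textsuperscript{ps})$\Rightarrow$(NA) for $\widetilde\Pi$, which you could simply cite from Proposition~\ref{prop:Comparision} (your explicit separator $y^i=\widetilde\pi_T^{1i}$ makes precise what that proof dismisses as ``the properties of a bid-ask matrix'', which is a nice touch but not new content). What each approach buys: the paper's argument is one line given the external duality result, but that result has to cope with the integrability of $\langle Z_T,v\rangle$ for unbounded $v$; your argument sidesteps duality and integrability entirely at the cost of leaning on the heavier Theorem~\ref{theo:MainResult1} directly rather than through Theorem~\ref{theo:MainResult2} -- though since the paper's proof of Theorem~\ref{theo:MainResult2} itself rests on Corollary~\ref{cor:MainResult1} and hence on Theorem~\ref{theo:MainResult1}, nothing extra is really being assumed. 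Both proofs are valid; yours is the more self-contained of the two.
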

The short proof is also deferred to Section~\ref{22.11.2018.1}. 	
\begin{Remark}\label{20.11.2018.2}
(NA\textsuperscript{wps}) postulates the existence of a bid-ask process $(\widetilde{\Pi}_t)_{t=0}^T$ such that $\widetilde{\Pi}_t\leq \Pi_t$ a.s. for all $t=0,\dots,T$ and
		\begin{align}\label{11.11.2018.2}
			\widetilde{\mathcal{A}}_0^t\cap\left(-\widetilde{\mathcal{A}}_t^T\right)\subseteq\widetilde{\mathcal{A}}_t^T\quad\mbox{for all }t=0,\dots,T.
		\end{align}
		 One may ask if one can replace this condition with the following slightly weaker condition:  there exists a bid-ask process  $(\widetilde{\Pi}_t)_{t=0}^T$ satisfying (NA), such that $\widetilde{\Pi}_t\leq \Pi_t$ a.s. for all $t=0,\dots,T$ and
		\beam\label{10.11.2018.1}
			\mathcal{A}_0^t\cap\left(-\widetilde{\mathcal{A}}_t^T\right)\subseteq\widetilde{\mathcal{A}}_t^T\quad\mbox{for all }t=0,\dots,T
\eeam
Indeed, by Proposition~\ref{prop:Comparision}, (\ref{11.11.2018.2}) implies that $\wt{\Pi}$ satisfies (NA), which means that the second condition is a weakening of the first one. 
In condition~(\ref{10.11.2018.1}), the position at time~$t$ is achieved by trading in the original market, only its ``evaluation'' is made in the more favorable market model~$\wt{\Pi}$.   
But, maybe surprisingly, it turns out that (\ref{10.11.2018.1})  does not exclude the existence of an approximate arbitrage and thus a CPS need not exist (see Example~\ref{example:Strengthofthecondition} below).	
\end{Remark}

\begin{Proposition}\label{prop:Comparision}
	 	We have the following implications
	 	\begin{align}\label{eq:Implications}
		 	(NA^r)\Rightarrow (NA^{ps})\Rightarrow (NA^{wps})\Rightarrow (NA).
	 	\end{align}
	 \end{Proposition}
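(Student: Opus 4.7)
The three implications have very different difficulty levels, so I would prove them separately. The middle implication (NA\textsuperscript{ps})~$\Rightarrow$~(NA\textsuperscript{wps}) is immediate from Definition~\ref{11.11.2018.1} by taking $\widetilde{\Pi}:=\Pi$ as witness.

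For (NA\textsuperscript{wps})~$\Rightarrow$~(NA), I first establish (NA\textsuperscript{ps})~$\Rightarrow$~(NA). Given $v\in\mathcal{A}_0^T\cap L^0(\mathbb{R}^d_+)$, the fact that $-e^i\in -K_T$ gives $-\mathbb{R}^d_+\subseteq -K_T$, so $-v\in\mathcal{A}_T^T$; applying (NA\textsuperscript{ps}) at $t=T$ yields $v\in\mathcal{A}_T^T=L^0(-K_T,\mathcal{F}_T)$. I now use the pointwise fact $(-K_t(\omega))\cap\mathbb{R}^d_+=\{0\}$: by property~(iii) of a bid-ask matrix, the strictly positive vector $p=(\pi^{1j}_t(\omega))_{j=1,\ldots,d}$ satisfies $\pi^{ij}_tp^i-p^j=\pi^{1i}_t\pi^{ij}_t-\pi^{1j}_t\ge 0$ and $p^i>0$, so $p\in K_t^\star(\omega)$, and pairing with a non-negative element of $-K_t$ forces it to vanish; hence $v=0$. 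To pass from (NA\textsuperscript{wps}) for $\Pi$ to (NA) for $\Pi$, let $\widetilde{\Pi}\le\Pi$ be the witness. The identity $e^j-\pi^{ij}_te^i=(e^j-\widetilde{\pi}^{ij}_te^i)+(\pi^{ij}_t-\widetilde{\pi}^{ij}_t)(-e^i)$ shows $-K_t\subseteq-\widetilde{K}_t$, so $\mathcal{A}_0^T\subseteq\widetilde{\mathcal{A}}_0^T$, and (NA) for $\widetilde{\Pi}$ (by the previous step) transfers to (NA) for $\Pi$.

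For (NA\textsuperscript{r})~$\Rightarrow$~(NA\textsuperscript{ps}), the substantive input is the classical fact (Schachermayer~\cite{schachermayer2004fundamental}, Rokhlin~\cite{rokhlin2008constructive}) that under (NA\textsuperscript{r}) every null-strategy lies componentwise in the symmetric cone, i.e., $\sum_{k=0}^T\zeta_k=0$ with $\zeta_k\in L^0(-K_k,\mathcal{F}_k)$ forces $\zeta_k\in L^0(K^0_k,\mathcal{F}_k)$ for every $k$. Given $v\in\mathcal{A}_0^t\cap(-\mathcal{A}_t^T)$, I decompose $v=\sum_{k=0}^t\xi_k$ and $-v=\sum_{k=t}^T\eta_k$ with $\xi_k,\eta_k\in L^0(-K_k,\mathcal{F}_k)$. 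Their concatenation is a null-strategy (with the two time-$t$ increments combined into $\xi_t+\eta_t$), so the cited fact gives $\eta_k\in L^0(K^0_k,\mathcal{F}_k)$ for $k>t$ and $\xi_t+\eta_t\in L^0(K^0_t,\mathcal{F}_t)$. Since $K^0_k$ is symmetric, $-\eta_k\in L^0(-K_k,\mathcal{F}_k)$ for $k>t$, and $-\eta_t=-(\xi_t+\eta_t)+\xi_t\in L^0(-K_t,\mathcal{F}_t)$ using that $-(\xi_t+\eta_t)\in L^0(K^0_t,\mathcal{F}_t)\subseteq L^0(-K_t,\mathcal{F}_t)$. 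Therefore $v=\sum_{k=t}^T(-\eta_k)\in\mathcal{A}_t^T$, which is exactly (NA\textsuperscript{ps}).

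The principal obstacle is the invocation of the vector-space property of null-strategies under (NA\textsuperscript{r}), which is the heart of the matter and by Rokhlin~\cite{rokhlin2008constructive} is in fact equivalent to (NA\textsuperscript{r}). A self-contained alternative would be to construct a strictly consistent price system~$Z$ (guaranteed by Schachermayer's FTAP under (NA\textsuperscript{r})) and exploit that $\sum_k\mathbb{E}[\langle Z_k,\zeta_k\rangle]=\mathbb{E}[\langle Z_T,\sum_k\zeta_k\rangle]=0$ for any null-strategy, while each summand is non-positive; strict positivity of $Z_k$ in the relative interior of $K_k^\star$ then forces $\zeta_k\in L^0(K^0_k,\mathcal{F}_k)$.
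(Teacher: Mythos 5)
Your proposal is correct and follows essentially the same route as the paper: the first implication is the same null-strategy concatenation argument relying on the componentwise $K^0_k$-property of null-strategies under (NA\textsuperscript{r}) (the paper cites Lemma~3.2.12 of \cite{kabanov2009markets} for this, which is the fact you invoke), the middle implication is immediate in both, and the last implication is the paper's argument with the pointwise facts $(-K_T(\omega))\cap\mathbb{R}^d_+=\{0\}$ and $-K_t\subseteq-\widetilde{K}_t$ spelled out explicitly rather than taken for granted.
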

\begin{Remark}	 
All implications in (\ref{eq:Implications}) are strict (see 
Examples~\ref{example:WPros_Pros_Robust} and \ref{example:WPros_Pros} below;  
for (NA) $\not\Rightarrow$  (NA\textsuperscript{wps}), consider an arbitrage-free model with an approximate arbitrage, Example 3.1 in \cite{schachermayer2004fundamental},
and apply Corollary~\ref{cor:MainResult2}).
\end{Remark}	
	\begin{proof}[Proof of Proposition~\ref{prop:Comparision}]
Ad 	(NA\textsuperscript{r}) $\Rightarrow$ (NA\textsuperscript{ps}). Assume that the bid-ask process $(\Pi_t)_{t=0}^T$ satisfies (NA\textsuperscript{r}) and let $v\in \mathcal{A}_0^t$  such that $-v\in \mathcal{A}_t^T$. We have to show $v\in\mathcal{A}_t^T$. According to our assumption, we have  $v=\sum_{s=0}^{t}\widetilde{\xi_s}$ with $\widetilde{\xi}_s\in L^0(-K_s,\mathcal{F}_s)$ for $s=0,\ldots,t$ and  
$-v=\sum_{s=t}^T\widehat{\xi}_s$ with $\widehat{\xi}_s\in L^0(-K_s,\mathcal{F}_s)$ for $s=t,\ldots,T$. Hence, we define $\xi_s\in L^0(-K_s,\mathcal{F}_s)$ by
	 	\begin{align*}
		 	\xi_s:=\begin{cases} \widetilde{\xi}_s, & s<t,\\
								 \widetilde{\xi}_s+\widehat{\xi}_s, & s=t,\\
								 \widehat{\xi}_s, & s>t,
				 	\end{cases}
	 	\end{align*}
	 	and notice that $\sum_{s=0}^{T}\xi_s=v-v=0$. From Lemma~3.2.12 in \cite{kabanov2009markets}, it follows that $\xi_s\in L^0(K^0_s,\mathcal{F}_s)$ for all $s=0,\dots,T$. In particular, we have $\widehat{\xi}_s\in L^0(K_s,\mathcal{F}_s)$ for $s> t$. In addition, we have 
$\widehat{\xi}_t=-\widetilde{\xi}_t+\xi_t\in  L^0(K_t,\mathcal{F}_t) + L^0(K_t,\mathcal{F}_t) = L^0(K_t,\mathcal{F}_t)$.
This implies $v=\sum_{s=t}^T(-\wh{\xi}_s)\in \mathcal{A}_t^T$, which concludes the proof of the first implication.\\
	 	
Ad 	(NA\textsuperscript{ps}) $\Rightarrow$ (NA\textsuperscript{wps}). Obvious.\\

Ad 	(NA\textsuperscript{wps}) $\Rightarrow$ (NA). Assume that the bid-ask process $(\Pi_t)_{t=0}^T$ satisfies (NA\textsuperscript{wps}), i.e., there exists a bid-ask process $(\widetilde{\Pi}_t)_{t=0}^T$ with $\widetilde{\Pi}_t\leq \Pi_t$ a.s. for all $t=0,\dots,T$ and $(\widetilde{\Pi}_t)_{t=0}^T$ satisfies (NA\textsuperscript{ps}). 
Let $v\in \mathcal{A}_0^T\cap L^0(\mathbb{R}^d_+)\subseteq \wt{\mathcal{A}}_0^T\cap L^0(\mathbb{R}^d_+)$. This obviously implies that 
$-v\in L^0(-\wt{K}_T,\mathcal{F}_T)$ and hence, by (NA\textsuperscript{ps}) of $\wt{\Pi}$,  $v\in L^0(-\wt{K}_T,\mathcal{F}_T)$.  
Together with $(-\wt{K}_T(\omega))\cap \mathbb{R}^d_+=\{0\}$ for each $\omega\in\Omega$, which holds by the properties of a bid-ask matrix, this implies $v=0$ a.s. Thus, the bid-ask process $(\Pi_t)_{t=0}^T$ satisfies (NA).
\end{proof}
\begin{Proposition}\label{prop:EF} Let the efficient friction (EF) condition hold, i.e., \begin{align}\tag{EF}\label{eq:EF}
	K^0_t(\omega):=K_t(\omega)\cap\left(-K_t\left(\omega\right)\right)=\{0\} \quad \text{for all}\ t=0,\dots,T\ \text{and}\ \omega\in\Omega,
	\end{align} or, equivalently, $\pi^{ij}_t(\omega)\pi^{ji}_t(\omega)>1$ for all $1\leq i\neq j\leq d$, $t=0,\dots,T$, and $\omega\in\Omega$. Then, we have the equivalence
	\begin{align}\label{eq:Equivalence}
	(NA^{ps})\Leftrightarrow (NA^{s}).
	\end{align}
\end{Proposition}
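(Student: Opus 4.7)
Plan. Under (EF) we have $L^0(K^0_t,\mathcal{F}_t)=\{0\}$ and in particular $L^0(K_t,\mathcal{F}_t)\cap L^0(-K_t,\mathcal{F}_t)=\{0\}$ for every $t$, so (NA\textsuperscript{s}) simplifies to $\mathcal{A}_0^t\cap L^0(K_t,\mathcal{F}_t)=\{0\}$ at every $t$. I would prove the two implications separately.

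For (NA\textsuperscript{s})$\Rightarrow$(NA\textsuperscript{ps}), fix $v\in\mathcal{A}_0^t\cap(-\mathcal{A}_t^T)$ and decompose $v=\sum_{k=0}^t\xi_k$ and $-v=\sum_{k=t}^T\eta_k$ with $\xi_k,\eta_k\in L^0(-K_k,\mathcal{F}_k)$. Concatenating these representations produces a global null-strategy
\beao
0=\xi_0+\dots+\xi_{t-1}+(\xi_t+\eta_t)+\eta_{t+1}+\dots+\eta_T .
\eeao
By Lemma~3.2.12 of \cite{kabanov2009markets} (valid under (NA\textsuperscript{s})), each summand lies in $L^0(K^0_k,\mathcal{F}_k)$ and hence vanishes by (EF). Thus $\xi_k=0$ for $k<t$, $\eta_k=0$ for $k>t$, and $v=\xi_t=-\eta_t\in L^0(-K_t,\mathcal{F}_t)\cap L^0(K_t,\mathcal{F}_t)=\{0\}\subseteq\mathcal{A}_t^T$, which is exactly what (NA\textsuperscript{ps}) demands.

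The reverse implication (NA\textsuperscript{ps})$\Rightarrow$(NA\textsuperscript{s}) is more delicate because Lemma~3.2.12 is not at our disposal: the null-strategy conclusion must be extracted from (NA\textsuperscript{ps}) itself. I would prove, by backward induction on $s$, the statement
\beao
\sum_{k=s}^T\mu_k=0,\ \mu_k\in L^0(-K_k,\mathcal{F}_k)\ \forall k \quad\Longrightarrow\quad \mu_k=0\ \forall k.
\eeao
The base case $s=T$ is trivial. For the inductive step, set $w:=-\mu_s=\sum_{k=s+1}^T\mu_k\in\mathcal{A}_{s+1}^T$; since $-w=\mu_s\in L^0(-K_s,\mathcal{F}_s)\subseteq\mathcal{A}_0^{s}\subseteq\mathcal{A}_0^{s+1}$, (NA\textsuperscript{ps}) applied at time $s+1$ yields $-w\in\mathcal{A}_{s+1}^T$, say $-w=\sum_{k=s+1}^T\nu_k$ with $\nu_k\in L^0(-K_k,\mathcal{F}_k)$. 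Adding the two decompositions gives $\sum_{k=s+1}^T(\mu_k+\nu_k)=0$; the induction hypothesis forces $\mu_k=-\nu_k\in L^0(-K_k,\mathcal{F}_k)\cap L^0(K_k,\mathcal{F}_k)=\{0\}$ by (EF), so $\mu_k=0$ for $k>s$ and hence $\mu_s=-w=0$ as well.

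With this null-strategy property in hand, (NA\textsuperscript{s}) follows easily: take $v\in\mathcal{A}_0^t\cap L^0(K_t,\mathcal{F}_t)$, so $-v\in L^0(-K_t,\mathcal{F}_t)\subseteq\mathcal{A}_t^T$, and (NA\textsuperscript{ps}) furnishes $v=\sum_{k=t}^T\xi_k$. Adjoining $(-v)\in L^0(-K_t,\mathcal{F}_t)$ to this sum produces the $[t,T]$-null-strategy $0=(\xi_t-v)+\xi_{t+1}+\dots+\xi_T$, which by the induction above forces $\xi_t=v$ and $\xi_k=0$ for $k>t$. Hence $v=\xi_t\in L^0(-K_t,\mathcal{F}_t)\cap L^0(K_t,\mathcal{F}_t)=\{0\}$. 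The main obstacle is precisely this bootstrapping: because (NA\textsuperscript{ps}) is strictly weaker than (NA\textsuperscript{r}), the linear-space property of null-strategies is not available a priori (cf.\ Rokhlin~\cite{rokhlin2008constructive}) and must be recovered iteratively, using (EF) at each stage to annihilate the two-sided containments $L^0(K_k,\mathcal{F}_k)\cap L^0(-K_k,\mathcal{F}_k)$.
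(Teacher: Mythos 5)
Your argument for (NA\textsuperscript{ps})$\Rightarrow$(NA\textsuperscript{s}) is correct and is essentially the paper's proof in a different packaging: the paper runs a backward induction on the statement $\mathcal{A}_0^t\cap L^0(K_t,\mathcal{F}_t)=\{0\}$ and peels off the increments $\xi_T,\xi_{T-1},\dots$ one at a time, whereas you isolate the same peeling as a stand-alone null-strategy lemma for sums over $\{s,\dots,T\}$ and then apply it at $s=t$; both inductions use (NA\textsuperscript{ps}) in the same way, namely to produce a second decomposition of $-w$ over $\{s+1,\dots,T\}$ that is cancelled against the first, after which (EF) annihilates each increment. Your treatment of the converse is genuinely different from the paper's: the paper does not prove (NA\textsuperscript{s})$\Rightarrow$(NA\textsuperscript{ps}) directly but observes that under (EF) the equivalence (NA\textsuperscript{s})$\Leftrightarrow$(NA\textsuperscript{r}) is classical and then quotes Proposition~\ref{prop:Comparision} for (NA\textsuperscript{r})$\Rightarrow$(NA\textsuperscript{ps}), while you rerun the concatenation argument of that proposition directly under (NA\textsuperscript{s}).

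The one point you must repair is the parenthetical ``Lemma~3.2.12 of \cite{kabanov2009markets} (valid under (NA\textsuperscript{s}))''. That lemma is the implication
\begin{align*}
\sum_{t=0}^{T}\xi_t=0,\ \xi_t\in L^0(-K_t,\mathcal{F}_t)\ \text{for all }t\quad\Longrightarrow\quad \xi_t\in L^0(K^0_t,\mathcal{F}_t)\ \text{for all }t,
\end{align*}
and by Rokhlin~\cite{rokhlin2008constructive} this implication is \emph{equivalent} to (NA\textsuperscript{r}); it therefore cannot be valid under (NA\textsuperscript{s}) alone, since (NA\textsuperscript{s}) does not imply (NA\textsuperscript{r}) in general (Example~3.3 of \cite{schachermayer2004fundamental}). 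In the present proposition you are saved by (EF): either invoke the known equivalence (NA\textsuperscript{s})$\Leftrightarrow$(NA\textsuperscript{r}) under efficient friction and then cite the lemma under (NA\textsuperscript{r}), as the paper effectively does, or note that under (EF) the null-strategy property follows from (NA\textsuperscript{s}) by a short backward recursion: $-\xi_T=\sum_{s<T}\xi_s\in\mathcal{A}_0^{T}\cap L^0(K_T,\mathcal{F}_T)\subseteq L^0(K^0_T,\mathcal{F}_T)=\{0\}$, hence $\xi_T=0$, and so on down to $t=0$. With either fix your forward direction is sound, and the proposal as a whole is correct.
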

\begin{Remark}
In general, (NA\textsuperscript{s}) is neither necessary nor sufficient for (NA\textsuperscript{ps}). Indeed, (NA\textsuperscript{ps}) $\not\Rightarrow$ (NA\textsuperscript{s}) is straightforward and 
(NA\textsuperscript{s}) $\not\Rightarrow$ (NA\textsuperscript{ps}) follows from Example~3.3 in \cite{schachermayer2004fundamental}. 	
	
But, it is also well-known that under efficient friction (NA\textsuperscript{r}) and (NA\textsuperscript{s}) are equivalent (cf. Theorem 1 in \cite{Kabanov2002} and Theorem 1.7 in \cite{schachermayer2004fundamental}). Thus, in this case (NA\textsuperscript{r}), (NA\textsuperscript{s}), and (NA\textsuperscript{ps}) coincide.
\end{Remark}
\begin{proof}[Proof of Proposition~\ref{prop:EF}]
In view of Proposition \ref{prop:Comparision} and the preceding remark, it is sufficient to show (NA\textsuperscript{ps})$\Rightarrow$(NA\textsuperscript{s}). Hence, we assume that (NA\textsuperscript{ps}) holds. Let us show by a backward induction on $t=T,T-1,\dots,0$ that $\mathcal{A}_0^t\cap L^0(K_t,\mathcal{F}_t)=\{0\}$. Let $t=T$ and $v\in \mathcal{A}_0^T\cap L^0(K_T,\mathcal{F}_T)$, then (NA\textsuperscript{ps}) implies $v\in L^0(-K_T,\mathcal{F}_T)$, i.e., $v\in L^0(K_T\cap (-K_T),\mathcal{F}_T)$, which, under \eqref{eq:EF}, is tantamount to $v=0$ a.s.
	
For the induction step $t+1\leadsto t$, we let $t<T$ and assume $A_0^s\cap L^0(K_s,\mathcal{F}_s)=\{0\}$ for $s=t+1,\dots,T$. Given $v\in \mathcal{A}_0^t\cap L^0(K_t,\mathcal{F}_t)$, we may write $v=\sum_{s=t}^{T}\xi_s$ for $\xi_s\in L^0(-K_s,\mathcal{F}_s)$ by (NA\textsuperscript{ps}). 
Since $-v\in L^0(-K_t,\mathcal{F}_t)$ and $-v+\sum_{s=t}^{T-1}\xi_s=-\xi_T$, we obtain $-v+\sum_{s=t}^{T-1}\xi_s\in\mathcal{A}_0^T\cap L^0(K_T,\mathcal{F}_T)$.
Thus, by the induction hypothesis, $-v+\sum_{s=t}^{T-2}\xi_s=-\xi_{T-1}$. Hence, $-v+\sum_{s=t}^{T-2}\xi_s\in\mathcal{A}_0^{T-1}\cap L^0(K_{T-1},\mathcal{F}_{T-1})$ and, again by the induction hypothesis, $-v+\sum_{s=t}^{T-3}\xi_s=-\xi_{T-2}$. Continuing inductively, we get $-v+\xi_t=0$, but this means $v\in L^0(K_t\cap (-K_t),\mathcal{F}_t)$ and thus $v=0$ a.s. by \eqref{eq:EF}.
\end{proof}
\begin{Remark}[Superhedging]
With Theorem~\ref{theo:MainResult1}, the superhedging result in 
Schachermayer~\cite{schachermayer2004fundamental} (see Theorem 4.1 therein) and its proof hold one-to-one under the slightly weaker assumption that  $(\Pi_t)_{t=0}^T$ satisfies (NA\textsuperscript{ps}) instead of 
(NA\textsuperscript{r}) -- only without the statement with ``strictly consistent price systems'' in the brackets.
\end{Remark}

\section{Proofs of the main results}\label{22.11.2018.1}
	This section is devoted to the proof of Theorem \ref{theo:MainResult1}. The other results of Section~\ref{3.11.2018.1} are standard consequences of $\mathcal{A}_0^T$ being closed and thus we 
mainly refer to the known results in the literature and highlight the minor adjustments.
The latter is postponed to the end of the section.
	
The main hurdle in the proof of Theorem~\ref{theo:MainResult1} is that we do not have at hand that the null-strategies, i.e., the elements of $(\xi_0,\dots,\xi_T)\in L^0(-K_0,\mathcal{F}_0)\times \dots\times L^0(-K_T,\mathcal{F}_T)$ with $\sum_{t=0}^{T}\xi_t=0$ a.s., form a linear space.  Namely, it is shown by Rokhlin~\cite{rokhlin2008constructive} that the implication
	 	\begin{align}
		 	\label{eq:NArImplication}
		 	\sum_{t=0}^{T}\xi_t=0 \ a.s. \ \text{with}\ \xi_t\in L^0(-K_t,\mathcal{F}_t)\Rightarrow \xi_t\in L^0(K^0_t,\mathcal{F}_t)\ \text{for all}\ t=0,\dots,T.
	 	\end{align}
is equivalent to (NA\textsuperscript{r}), which is strictly stronger than 
(NA\textsuperscript{ps}).
Thus, in the following we propose a new proof method which overcomes this hurdle.
	
Before starting with the main proof, we show that $L^0(-K_t,\mathcal{F}_t)$ coincides with the set given in \eqref{eq:PortfoliosAttainable}. Later, this allows us to argue directly with orders $\lambda\in L^0(\mathbb{R}^{d\times d}_+,\mathcal{F}_t)$ and vectors $r\in L^0(\mathbb{R}^d_+,\mathcal{F}_t)$ instead of the resulting elements of  $L^0(-K_t,\mathcal{F}_t)$.  In order to ease notation, we define for all $t=0,\dots,T$ the mapping $L_t:L^0(\mathbb{R}^{d\times d}_+,\mathcal{F}_t)\to L^0(\mathbb{R}^d,\mathcal{F}_t)$ by 
	\begin{align*}
		L_t(\lambda_t)=\sum_{1\leq i,j\leq d}\lambda^{ij}_t(e^j-\pi^{ij}_te^i)\quad
		\mbox{for all } \lambda_t=(\lambda^{ij}_t)_{1\leq i,j\leq d}\in L^0(\mathbb{R}^{d\times d}_+,\mathcal{F}_t).
	\end{align*}
	\begin{Lemma}\label{Lemma:Representations}
		Let $\Pi=(\Pi_t)_{t=0}^T$ denote a bid-ask process. Then, we have 
		\begin{align}\label{eq:Representation1}
			L^0(-K_t,\mathcal{F}_t)=\left\{L_t(\lambda_t)-r_t\mid \lambda_t\in L^0(\mathbb{R}^{d\times d}_+,\mathcal{F}_t), r_t\in L^0(\mathbb{R}^d_+,\mathcal{F}_t)\right\}
		\end{align}
		for all $t=0,\dots,T$ and, consequently, we have 
		\begin{align} \label{eq:Representation2}
			\mathcal{A}_s^t=\left\{\sum_{k=s}^{t}L_k(\lambda_k)-r\mid \lambda_k\in L^0(\mathbb{R}^{d\times d}_+,\mathcal{F}_k),\ k=s,\dots,t,\ r\in L^0(\mathbb{R}^d_+,\mathcal{F}_t)\right\}
		\end{align}
		for all $0\leq s\leq t\leq T$. 
	\end{Lemma}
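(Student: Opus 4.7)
My plan is to establish the first equality~\eqref{eq:Representation1} pointwise in $\omega$ and then upgrade it to measurable selectors; the second equality~\eqref{eq:Representation2} will follow from the first by adding up over $k$ and aggregating the ``throw-away'' terms.

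For the pointwise statement, fix $\omega\in\Omega$. By the very definition, $-K_t(\omega)$ is the conic hull of the vectors $\{e^j-\pi_t^{ij}(\omega)e^i\}_{1\le i,j\le d}$ together with $\{-e^i\}_{1\le i\le d}$, so an element of $-K_t(\omega)$ can be written exactly as $L_t(\omega)(\lambda)-r$ for some non-negative matrix $\lambda$ and some non-negative vector $r$. This proves the ``$\supseteq$'' inclusion in~\eqref{eq:Representation1} immediately, since the map $(\omega,\lambda,r)\mapsto L_t(\omega)(\lambda)-r$ is $\mathcal{F}_t$-measurable when $\lambda,r$ are $\mathcal{F}_t$-measurable.

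For the nontrivial ``$\subseteq$'' inclusion, given $v\in L^0(-K_t,\mathcal{F}_t)$, consider the set-valued map
\begin{align*}
F(\omega):=\bigl\{(\lambda,r)\in\mathbb{R}^{d\times d}_+\times \mathbb{R}^d_+\ \big\vert\ L_t(\omega)(\lambda)-r=v(\omega)\bigr\}.
\end{align*}
By the pointwise step above, $F(\omega)$ is non-empty for a.e.~$\omega$, and it is closed since it is the preimage of the singleton $\{v(\omega)\}$ under a continuous affine map. The map $(\omega,\lambda,r)\mapsto L_t(\omega)(\lambda)-r-v(\omega)$ is Carathéodory (continuous in $(\lambda,r)$, $\mathcal{F}_t$-measurable in $\omega$ because the $\pi_t^{ij}$ and $v$ are $\mathcal{F}_t$-measurable), so $F$ has $\mathcal{F}_t\otimes\mathcal{B}(\mathbb{R}_+^{d\times d+d})$-measurable graph. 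The Kuratowski--Ryll-Nardzewski measurable selection theorem then yields an $\mathcal{F}_t$-measurable selector $\omega\mapsto(\lambda_t(\omega),r_t(\omega))$, which is the desired representation.

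For~\eqref{eq:Representation2}, the definition $\mathcal{A}_s^t=\sum_{k=s}^t L^0(-K_k,\mathcal{F}_k)$ and~\eqref{eq:Representation1} give
\begin{align*}
\mathcal{A}_s^t=\Bigl\{\sum_{k=s}^{t}\bigl(L_k(\lambda_k)-r_k\bigr)\ \Big\vert\ \lambda_k\in L^0(\mathbb{R}^{d\times d}_+,\mathcal{F}_k),\ r_k\in L^0(\mathbb{R}^d_+,\mathcal{F}_k),\ s\le k\le t\Bigr\}.
\end{align*}
Since each $r_k$ with $k\le t$ is $\mathcal{F}_k$-measurable and hence $\mathcal{F}_t$-measurable, the sum $r:=\sum_{k=s}^{t}r_k$ lies in $L^0(\mathbb{R}^d_+,\mathcal{F}_t)$, which proves ``$\subseteq$'' in~\eqref{eq:Representation2}. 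The converse inclusion is obtained by taking $r_s=\cdots=r_{t-1}=0$ and $r_t=r$. The only real content of the lemma is the measurable-selection step; the rest is bookkeeping.
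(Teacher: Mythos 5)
Your proof is correct and follows essentially the same route as the paper: the nontrivial inclusion is obtained by applying a measurable selection theorem to the (non-empty, closed-valued) solution map $\omega\mapsto\{(\lambda,r): L_t(\omega)(\lambda)-r=v(\omega)\}$ of a Carath\'eodory constraint, exactly as the paper does via Theorem~14.36 of Rockafellar--Wets, with your invocation of Kuratowski--Ryll-Nardzewski being an equivalent packaging. The only cosmetic difference is that the paper first replaces $v$ by $\mathbbm{1}_{\Omega\setminus N}v$ so the solution set is non-empty for \emph{every} $\omega$ rather than almost every $\omega$ before selecting; you should do the same (or restrict to the complement of the null set), but this is a trivial adjustment.
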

	\begin{proof} For each $\lambda_t\in L^0(\mathbb{R}^{d\times d}_+,\mathcal{F}_t)$ and $r_t\in L^0(\mathbb{R}^d_+,\mathcal{F}_t)$ the random vector $L_t(\lambda_t)-r_t$ is an element of $L^0(-K_t,\mathcal{F}_t)$. Hence, we only have to show that for each $v\in L^0(-K_t,\mathcal{F}_t)$, we can find $\lambda_t\in L^0(\mathbb{R}^{d\times d}_+,\mathcal{F}_t)$ and $r_t\in L^0(\mathbb{R}^d_+,\mathcal{F}_t)$ such that $v=L_t(\lambda_t)-r_t$ a.s.
		
For this, let $v\in L^0(-K_t,\mathcal{F}_t)$, i.e., $v(\omega)\in -K_t(\omega)$ for each $\omega\in\Omega\setminus N$, where $N\in\mathcal{F}_t$ is a set of measure zero. Then $\widetilde{v}:=\mathbbm{1}_{\Omega\setminus N} v\in L^0(-K_t,\mathcal{F}_t)$ satisfies $\widetilde{v}=v$ a.s. and $\widetilde{v}(\omega)\in -K_t(\omega)$ for all $\omega\in \Omega$. Next, we define the set-valued mapping $\omega\mapsto P(\omega)\subseteq\mathbb{R}^{d\times d}\times \mathbb{R}^d$ by
		\begin{align*}
			 P(\omega):=\left\{\left(\lambda,r\right)\in\mathbb{R}^{d\times d}\times \mathbb{R}^d\mid \lambda,r \geq 0, \ \sum_{1\leq i,j\leq d} \lambda^{ij}\left(e^j-\pi^{ij}_t(\omega)e^i\right)-r=\widetilde{v}(\omega)\right\}.
		\end{align*}
		We have $P(\omega)\neq \emptyset$ for each $\omega\in\Omega$ by virtue of $\widetilde{v}(\omega)\in -K_t(\omega)$ for each $\omega\in \Omega$. In addition, 
the mapping $\omega\mapsto \sum_{1\leq i,j\leq d} \lambda^{ij}\left(e^j-\pi^{ij}_t(\omega)e^i\right)-r$ is $\mathcal{F}_t$-measurable for each $(\lambda,r)\in\mathbb{R}^{d\times d}_+\times \mathbb{R}^d_+$, and the mapping $(\lambda,r)\mapsto \sum_{1\leq i,j\leq d} \lambda^{ij}\left(e^j-\pi^{ij}_t(\omega)e^i\right)-r$
is continuous for each $\omega\in\Omega$. Hence, we may apply Theorem 14.36 in \cite{rockafellar2009variational} to find $\lambda_t\in L^0(\mathbb{R}^{d\times d}_+,\mathcal{F}_t)$ and $r_t\in L^0(\mathbb{R}^d_+,\mathcal{F}_t)$ such that $(\lambda_t(\omega),r_t(\omega))\in P(\omega)$ for all $\omega\in\Omega$. This yields $\widetilde{v}(\omega)=\sum_{1\leq i,j\leq d} \lambda^{ij}_t(\omega)\left(e^j-\pi^{ij}_t(\omega)e^i\right)-r(\omega)$ for each $\omega\in \Omega$ and, consequently, we have $v=L_t(\lambda_t)-r_t$ a.s. At last, \eqref{eq:Representation2} follows directly from \eqref{eq:Representation1}.
	\end{proof}

\begin{definition}
		 For any $t\in\{0,\dots,T-1\}$, we define the (convex) cone of reversible orders at time $t$ by 
		 \begin{align*}
		 	\mathcal{R}_t:=\{\lambda\in L^0(\mathbb{R}^{d\times d}_+,\mathcal{F}_t)\mid -L_t(\lambda)\in \mathcal{A}_{t+1}^T\}.
		 \end{align*}
	\end{definition}
The following lemma establishes a suitable decomposition  of the elements of $L^0(\mathbb{R}^{d\times d}_+,\mathcal{F}_t)$ into reversible and
``purely non-reversible'' orders. For the decomposition, one needs that $\mathcal{R}_t$ is closed in probability. To achieve this, the lemma {\em assumes} that $\mathcal{A}_{t+1}^T$ is closed  in probability, a property that is not yet shown at this place. 
 \begin{Lemma}\label{Lemma:ProjectionProperties}
	 	Let $t\in \{0,\dots,T-1\}$ and assume that $\mathcal{A}_{t+1}^T$ is closed in probability. Then for any $\lambda\in L^0(\mathbb{R}^{d\times d}_+,\mathcal{F}_t)$ there is a unique pair (up to null sets) $\lambda_1\in \mathcal{R}_t$ and $\lambda_2\in L^0(\mathbb{R}^{d\times d}_+,\mathcal{F}_t)$ with $\lambda=\lambda_1+\lambda_2$ such that for any decomposition $\lambda=\widetilde{\lambda}_1+\widetilde{\lambda}_2$ with $\widetilde{\lambda}_1\in \mathcal{R}_t$, $\widetilde{\lambda}_2\in L^0(\mathbb{R}^{d\times d}_+,\mathcal{F}_t)$, we have 
	 	\begin{align}\label{eq:ProjectionOptimality}
		 	\Vert \lambda_2\Vert_2\leq \Vert \widetilde{\lambda}_2\Vert_2 \quad
		 	\mathbb{P}\text{-a.s.},
	 	\end{align}
	 	where the inequality is strict on $\{\lambda_2\neq \widetilde{\lambda}_2\}$\ $\mathbb{P}\text{-a.s.}$ and  $\Vert \cdot\Vert_2$ denotes the Euclidean norm on $\mathbb{R}^{d\times d}$.
	 	In addition, the mappings $p_t:L^0(\mathbb{R}^{d\times d}_+,\mathcal{F}_t)\to \mathcal{R}_t$ and $q_t:L^0(\mathbb{R}^{d\times d}_+,\mathcal{F}_t)\to L^0(\mathbb{R}^{d\times d}_+,\mathcal{F}_t)$ defined by $p_t(\lambda)=\lambda_1$ and $q_t(\lambda)=\lambda_2$ have the following properties:
	 	\begin{enumerate}[(i)]
	 		\item \label{item:ProjectionProperty1} For all $\lambda\in L^0(\mathbb{R}^{d\times d}_+,\mathcal{F}_t)$ and all non-negative $\mathcal{F}_t$-measurable scalars $\mu$ we have $p_t(\mu \lambda)=\mu p_t(\lambda)$,
	 		\item \label{item:ProjectionProperty2}  $\mathrm{Image}(q_t)=\{\lambda\in L^0(\mathbb{R}^{d\times d}_+,\mathcal{F}_t)\mid q_t(\lambda)=\lambda\}$,
	 		\item \label{item:ProjectionProperty3} $\mathrm{Image}(p_t)\cap\mathrm{Image}(q_t)=\{0\}$.
\end{enumerate}
	\end{Lemma}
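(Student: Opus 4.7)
The plan is to realize the decomposition as a measurable $L^2$-projection of $\lambda$ onto the convex cone $\mathcal{R}_t$, performed inside the positive orthant. Set $A:=\{\nu\in L^0(\mathbb{R}^{d\times d}_+,\mathcal{F}_t):\lambda-\nu\in\mathcal{R}_t\}$, so admissible $\lambda_2$'s are exactly the elements of $A$. I would first verify that $A$ is non-empty ($\lambda\in A$ since $0\in\mathcal{R}_t$), convex, and closed in probability. Closedness of $A$ reduces to that of $\mathcal{R}_t$, which follows from continuity of $L_t$ on $L^0$ (since $\Pi_t$ is fixed) together with the assumed closedness of $\mathcal{A}_{t+1}^T$. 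The other key structural property is $\mathcal{F}_t$-stability of $A$: for $\nu_1,\nu_2\in A$ and $B\in\mathcal{F}_t$, the glued element $\mathbbm{1}_B\nu_1+\mathbbm{1}_{B^c}\nu_2$ again belongs to $A$, using $B\in\mathcal{F}_t\subseteq\mathcal{F}_k$ for $k\ge t+1$ and the cone property of each $-K_k$.

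For existence I would consider the essential infimum $f:=\operatorname{ess\,inf}\{\Vert\nu\Vert_2:\nu\in A\}$. By $\mathcal{F}_t$-stability the family $\{\Vert\nu\Vert_2:\nu\in A\}$ is directed downward (the element glued over $B:=\{\Vert\nu_1\Vert_2\le\Vert\nu_2\Vert_2\}$ realises the pointwise minimum), so there is a sequence $(\nu_n)\subseteq A$ with $\Vert\nu_n\Vert_2\searrow f$ almost surely. The parallelogram identity, applied pointwise, reads
\begin{equation*}
\Vert\nu_n-\nu_m\Vert_2^2=2\Vert\nu_n\Vert_2^2+2\Vert\nu_m\Vert_2^2-4\Vert(\nu_n+\nu_m)/2\Vert_2^2,
\end{equation*}
and since $(\nu_n+\nu_m)/2\in A$ by convexity satisfies $\Vert(\nu_n+\nu_m)/2\Vert_2\ge f$ a.s., the right-hand side tends to $0$ almost surely. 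Thus $(\nu_n)$ is pointwise a.s.\ Cauchy, its limit $\lambda_2$ lies in $A$ by closedness, and $\lambda_1:=\lambda-\lambda_2\in\mathcal{R}_t$.

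The main obstacle is to upgrade this essinf-minimality to the pointwise inequality (\ref{eq:ProjectionOptimality}) together with the strict inequality on $\{\lambda_2\ne\widetilde\lambda_2\}$, and both are handled by $\mathcal{F}_t$-gluing. If some competitor $\widetilde\lambda_2\in A$ satisfied $\Vert\widetilde\lambda_2\Vert_2<\Vert\lambda_2\Vert_2$ on a set $B\in\mathcal{F}_t$ of positive measure, then $\mathbbm{1}_B\widetilde\lambda_2+\mathbbm{1}_{B^c}\lambda_2\in A$ would have norm $<f$ on $B$, contradicting $\Vert\,\cdot\,\Vert_2\ge f$ on $A$. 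For the strict inequality, suppose $\widetilde\lambda_2\in A$ with $\Vert\widetilde\lambda_2\Vert_2=\Vert\lambda_2\Vert_2=f$ on some $B\in\mathcal{F}_t$ of positive measure but $\widetilde\lambda_2\ne\lambda_2$ on $B$; the parallelogram identity then gives $\Vert(\lambda_2+\widetilde\lambda_2)/2\Vert_2<f$ on $B$, and gluing $(\lambda_2+\widetilde\lambda_2)/2$ with $\lambda_2$ over $B$ again produces an element of $A$ with norm $<f$ on $B$, contradicting the essinf.

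Properties (i)--(iii) then follow from the uniqueness. For (i), $(\mu\lambda_1,\mu\lambda_2)$ is an admissible decomposition of $\mu\lambda$ (since $\mathcal{R}_t$ is a cone stable under nonnegative $\mathcal{F}_t$-measurable scaling), and uniqueness yields $p_t(\mu\lambda)=\mu p_t(\lambda)$. For (ii), if $\lambda=q_t(\mu)$, then $q_t(\lambda)$ is itself a competitor for $\mu$ via $\mu=(p_t(\mu)+p_t(\lambda))+q_t(\lambda)$, so optimality gives $\Vert\lambda\Vert_2\le\Vert q_t(\lambda)\Vert_2$ a.s.; but $\lambda=p_t(\lambda)+q_t(\lambda)$ with both summands nonnegative forces $\Vert\lambda\Vert_2\ge\Vert q_t(\lambda)\Vert_2$ with equality only when $p_t(\lambda)=0$, so $q_t(\lambda)=\lambda$, and the reverse inclusion is trivial. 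For (iii), $\lambda\in\mathrm{Image}(p_t)$ entails $\lambda\in\mathcal{R}_t$, and then $(\lambda,0)$ is the unique optimal decomposition so $q_t(\lambda)=0$; combined with $q_t(\lambda)=\lambda$ from (ii), this forces $\lambda=0$.
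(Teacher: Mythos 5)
Your proof is correct and follows essentially the same route as the paper: an $L^0$-projection onto the closed, convex, $\mathcal{F}_t$-decomposable set of admissible decompositions (you parameterize by the second component, the paper by the first), with existence via downward-directedness and the parallelogram law, and pointwise optimality and strictness via gluing over $\mathcal{F}_t$-sets. The only step worth tightening is (i): admissibility of $(\mu\lambda_1,\mu\lambda_2)$ alone does not make it optimal for $\mu\lambda$ --- you also need that on $\{\mu>0\}$ every admissible decomposition of $\mu\lambda$ is $\mu$ times one of $\lambda$, which does follow from the scaling stability you cite, exactly as in the paper's identity $X_{\mu\lambda}=\mu X_\lambda$.
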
  
We refer to  $p_t(\lambda)$ and $q_t(\lambda)$ as the reversible and the 
purely non-reversible part of the order~$\lambda\in L^0(\mathbb{R}^{d\times d}_+,\mathcal{F}_t)$, respectively. The following continuity of the decomposition is the last ingredient for the proof of 
Theorem~\ref{theo:MainResult1}.
%
\begin{Lemma} \label{Lemma:ClosedImQ}
Let $t\in \{0,\dots,T-1\}$ and assume that $\mathcal{A}_{t+1}^T$ is closed in probability. Let $(\lambda_n)_{n\in\mathbb{N}}\subseteq L^0(\mathbb{R}^{d\times d}_+,\mathcal{F}_t)$ converge  $\mathbb{P}\text{-a.s.}$ to some $\lambda\in L^0(\mathbb{R}^{d\times d}_+,\mathcal{F}_t)$. 
Then, $p_t(\lambda_n)\to p_t(\lambda)$ and $q_t(\lambda_n)\to q_t(\lambda)$  $\mathbb{P}\text{-a.s.}$ for $n\to\infty$. Especially, 
$\mathrm{Image}(q_t)$ is closed  in probability.
\end{Lemma}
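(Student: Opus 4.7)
My plan is to combine a subsequence-extraction / uniqueness argument with a conic scaling trick that substitutes for the linear-space arguments of \cite{schachermayer2004fundamental,kabanov2003closedness}. All reasoning is pointwise a.s., since the decomposition and the optimality inequalities from Lemma~\ref{Lemma:ProjectionProperties} are pointwise. Since $0\le p_t(\lambda_n), q_t(\lambda_n)\le\lambda_n$ componentwise and $\lambda_n\to\lambda$ a.s., both families are a.s.\ bounded. A standard measurable diagonal subsequence extraction then provides indices $(n_k)$ along which $p_t(\lambda_{n_k})\to\mu_1$ and $q_t(\lambda_{n_k})\to\mu_2$ a.s., with $\mu_1+\mu_2=\lambda$. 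Closedness of $\mathcal{A}_{t+1}^T$ in probability together with continuity of $L_t$ implies that $\mathcal{R}_t$ is closed in $L^0$, hence $\mu_1\in\mathcal{R}_t$ and $\mu_2\ge 0$; thus $(\mu_1,\mu_2)$ is an admissible decomposition of $\lambda$ and the optimality in Lemma~\ref{Lemma:ProjectionProperties} gives $\|q_t(\lambda)\|_2\le\|\mu_2\|_2$ a.s.

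The crucial step is the reverse inequality $\limsup_n\|q_t(\lambda_n)\|_2\le\|q_t(\lambda)\|_2$ a.s., for which I introduce the $\mathcal{F}_t$-measurable scalar
\[
\alpha_n:=\min\Bigl\{1,\ \min_{i,j:\,p_t(\lambda)^{ij}>0}\frac{\lambda_n^{ij}}{p_t(\lambda)^{ij}}\Bigr\}\in[0,1],
\]
with the empty inner minimum set to $+\infty$. By construction $\alpha_n\,p_t(\lambda)\le\lambda_n$ componentwise, and since $\mathcal{R}_t$ is a cone stable under multiplication by non-negative $\mathcal{F}_t$-measurable scalars, $\alpha_n\,p_t(\lambda)\in\mathcal{R}_t$. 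Hence $\lambda_n=\alpha_n\,p_t(\lambda)+\bigl(\lambda_n-\alpha_n\,p_t(\lambda)\bigr)$ is an admissible decomposition of $\lambda_n$, and optimality of $q_t(\lambda_n)$ yields
\[
\|q_t(\lambda_n)\|_2\le\|\lambda_n-\alpha_n\,p_t(\lambda)\|_2.
\]
Because $p_t(\lambda)\le\lambda$ (since $q_t(\lambda)\ge 0$) and $\lambda_n\to\lambda$ a.s., each ratio $\lambda_n^{ij}/p_t(\lambda)^{ij}$ tends to a limit $\ge 1$ on $\{p_t(\lambda)^{ij}>0\}$, so $\alpha_n\to 1$ a.s.\ and $\|\lambda_n-\alpha_n\,p_t(\lambda)\|_2\to\|q_t(\lambda)\|_2$ a.s.

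Combining both bounds along the chosen subsequence forces $\|\mu_2\|_2=\|q_t(\lambda)\|_2$, and the strict-inequality clause of Lemma~\ref{Lemma:ProjectionProperties} then implies $\mu_2=q_t(\lambda)$ and $\mu_1=p_t(\lambda)$ a.s. Since every a.s.\ convergent subsequence of the bounded sequence $(p_t(\lambda_n),q_t(\lambda_n))$ must share this limit, the full sequence itself converges a.s.\ to $(p_t(\lambda),q_t(\lambda))$. For the closedness of $\mathrm{Image}(q_t)$, take $\mu_n\in\mathrm{Image}(q_t)$ with $\mu_n\to\mu$ in probability, pass to an a.s.\ convergent subsequence, and apply the continuity just established: $p_t(\mu_{n_k})\to p_t(\mu)$ a.s.; but $p_t(\mu_{n_k})=0$ by Lemma~\ref{Lemma:ProjectionProperties}(ii), hence $p_t(\mu)=0$ and $\mu\in\mathrm{Image}(q_t)$. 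The hard part is precisely the reverse-inequality step: because $\mathcal{R}_t$ is merely a cone (the linear-space property of null-strategies is unavailable under (NA\textsuperscript{ps}) by Rokhlin's theorem), one cannot linearly transport the optimal decomposition of $\lambda$ back to $\lambda_n$, and the scalar $\alpha_n$ is the positively homogeneous substitute exploiting the only invariance actually present, namely stability of $\mathcal{R}_t$ under non-negative $\mathcal{F}_t$-measurable scaling.
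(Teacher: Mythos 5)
Your proof is correct and follows essentially the same route as the paper: your scaling factor $\alpha_n$ is exactly the paper's $\mu_n$ from~(\ref{mu}), and identifying the subsequential limit via the uniqueness clause of Lemma~\ref{Lemma:ProjectionProperties} is the same argument. The only step you assert rather than prove is the final passage from ``every convergent subsequence has limit $(p_t(\lambda),q_t(\lambda))$'' to a.s.\ convergence of the full sequence; since a.s.\ convergence is not metrizable (and bounded sequences of random variables need not admit a.s.-convergent \emph{deterministic} subsequences), this must be implemented with random subsequences -- measurably selecting, on the set $A=\{p_t(\lambda_n)\not\to p_t(\lambda)\}$, indices at which $\Vert p_t(\lambda_n)-p_t(\lambda)\Vert_2\ge\eps$ for an $\mathcal{F}_t$-measurable $\eps>0$ and extracting a convergent random sub-subsequence -- which is exactly the $\eps(\omega)$/$\tau_k$ construction the paper carries out.
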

We postpone the proofs of the two lemmas to make some comments on their use. By the prospective strict no-arbitrage~(NA\textsuperscript{ps}) property, reversible orders can be postponed to later periods $s\in\{t+1,\dots,T\}$. Thus, any order at time~$t$ can be replaced by its purely non-reversible part at time~$t$. On the other hand, if a sequence in
$\mathcal{A}_t^T$ converges, an explosion of the purely non-reversible orders at time~$t$ can be led to a contradiction. The mapping $p_t$ plays the role of the projection of an arbitrary self-financing strategy onto the set of null-strategies in  \cite{schachermayer2004fundamental}.
There, the null-strategies form a linear subspace, which implies that the orthogonal part is automatically self-financing. This property is not available here, and thus we cannot argue with a projection, but with a more complicated decomposition. 

Alternatively, the decomposition in 
Lemma~\ref{Lemma:ProjectionProperties} could also be defined 
on the level of portfolio changes~$\vartheta_t - \vartheta_{t-1}\in L^0(\mathbb{R}^d,\mathcal{F}_t)$. But, in the proof of Lemma~\ref{Lemma:ClosedImQ}, we have to argue 
directly with the orders~$\lambda$.\\

For the convenience of the reader, we recall a lemma on the existence of a measurable subsequence that is applied several times in the following proofs (see, e.g. \cite{schachermayer2004fundamental} and \cite{kabanov2001teacher}).
			\begin{Lemma}[Lemma A.2 of \cite{schachermayer2004fundamental}]\label{Lemma: MeasurableSubsequence}
Let $t\in\{0,\ldots,T\}$. For a sequence $(f_n)_{n\in\mathbb{N}}\subseteq 
L^0(\mathbb{R}^{d\times d}_+, \mathcal{F}_t)$, there is a random subsequence $(\tau_k)_{k\in\mathbb{N}}$, i.e., 
a strictly increasing sequence of $\mathbb{N}$-valued $\mathcal{F}_t$-measurable random variables such that the sequence of random variables $(g_k)_{k\in\mathbb{N}}$ given by $g_k(\omega):=f_{\tau_k(\omega)}(\omega)$, $k\in\mathbb{N}$, converges a.s. in the one-point-compactification $\mathbb{R}^{d\times d}_+\cup\{\infty\}$ to a random variable in $f\in L^0(\mathbb{R}^{d\times d}_+\cup\{\infty\}, \mathcal{F}_t)$. In fact, we may find the subsequence such that
				\begin{align*}
					\Vert f\Vert_{2}=\limsup_{n\to\infty}\Vert f_n\Vert_{2},\ \mathbb{P}\text{-a.s.}
				\end{align*}
				where $\Vert \infty\Vert_{2}=\infty$.
			\end{Lemma}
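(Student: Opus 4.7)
The plan is to follow the classical Kabanov--Stricker construction: iterated measurable extraction of convergent subsubsequences, coordinate by coordinate, combined with a pointwise $\limsup$ trick to enforce the norm identity. The argument would split $\Omega$ into the two $\mathcal{F}_t$-measurable sets $\{\eta=\infty\}$ and $\{\eta<\infty\}$, where $\eta(\omega):=\limsup_{n\to\infty}\|f_n(\omega)\|_2\in[0,\infty]$, and construct the random subsequence separately on each piece before gluing.

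On $\{\eta=\infty\}$, I would set $\sigma_0\equiv 0$ and recursively $\sigma_k(\omega):=\min\{n>\sigma_{k-1}(\omega)\mid \|f_n(\omega)\|_2>k\}$. These indices are $\mathcal{F}_t$-measurable since the condition under the $\min$ is, for each fixed $k$ and $n$, an $\mathcal{F}_t$-measurable event, and $\sigma_k$ is defined as a pointwise minimum over a countable family. By construction $\|f_{\sigma_k}\|_2\to\infty$ a.s.\ on this set, which is precisely $f_{\sigma_k}\to\infty$ in the one-point compactification of $\mathbb{R}^{d\times d}_+$, and the norm equality is automatic.

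On $\{\eta<\infty\}$, I would first choose a measurable $\sigma_k$, again by the random-minimum construction, so that $|\|f_{\sigma_k}\|_2-\eta|<1/k$ a.s. Norm-boundedness of $(f_{\sigma_k})$ then permits coordinatewise extraction: setting $\xi^{(i,j)}(\omega):=\limsup_{k} f_{\sigma_k}(\omega)^{(i,j)}\in[0,\infty)$, I would refine to a subsubsequence $\sigma^{(i,j)}_k$ along which $|f_{\sigma^{(i,j)}_k}^{(i,j)}-\xi^{(i,j)}|<1/k$, iterating over the finitely many indices $(i,j)\in\{1,\dots,d\}^2$ and always passing to a subsequence of the previous one. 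After $d^2$ refinements, the final random subsequence converges componentwise, hence as a vector, a.s.\ to some $f\in L^0(\mathbb{R}^{d\times d}_+,\mathcal{F}_t)$ with $\|f\|_2=\eta$. Finally I would glue by $\tau_k:=\sigma^{(d,d)}_k\mathbbm{1}_{\{\eta<\infty\}}+\sigma_k\mathbbm{1}_{\{\eta=\infty\}}$, with a harmless diagonal sub-selection to enforce strict monotonicity on both pieces simultaneously.

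I expect the main obstacle to be the measurability bookkeeping in the nested extractions on $\{\eta<\infty\}$. A single step of the form ``pick the first $n>\sigma_{k-1}(\omega)$ satisfying a measurable condition'' is transparently $\mathcal{F}_t$-measurable, but iterating $d^2$ times while maintaining strict monotonicity and simultaneously matching the $\limsup$ of each component requires that every refinement only involve countable unions and intersections of measurable sets. This forces one to work with the pointwise $\limsup$ of each component along the current subsequence rather than invoking any heavier analytic/Borel-selection machinery, and to keep careful track that each $\sigma^{(i,j)}_k$ is a subsequence of $\sigma^{(i',j')}_k$ for all previously treated coordinates.
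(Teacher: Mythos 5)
Your proposal is correct and is essentially the standard Kabanov--Stricker argument that the paper itself relies on by citing Lemma~A.2 of Schachermayer (2004) rather than reproving it: split on $\{\limsup_n\Vert f_n\Vert_2=\infty\}$ versus its complement, extract a measurable subsequence realizing the $\limsup$ of the norms via pointwise first-entrance times, and then refine coordinatewise finitely many times. All the extraction steps you describe are of the measurable ``first index exceeding the previous one and satisfying a countable measurable condition'' type, so the measurability bookkeeping you flag as the main obstacle goes through without further machinery.
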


\begin{proof}[Proof of Lemma~\ref{Lemma:ProjectionProperties}] 
First, we show the existence and uniqueness of the decomposition satisfying (\ref{eq:ProjectionOptimality}).
Fix $\lambda\in L^0(\mathbb{R}^{d\times d}_+,\mathcal{F}_t)$ and define the non-empty set  
		\begin{align*}
			X_\lambda:=\{\widetilde{\lambda}\in \mathcal{R}_t\mid \lambda-\widetilde{\lambda}\in L^0(\mathbb{R}^{d\times d}_+,\mathcal{F}_t)\},
		\end{align*} 
which consists of the first components of the possible decompositions of $\lambda$. 
Under the assumptions made, the convex cone $\mathcal{R}_t$ is closed  in probability and closed under multiplication with non-negative $\mathcal{F}_t$-measurable scalars.
This implies that $X_\lambda$ is closed  in probability and closed under measurable convex combinations. We have to show that
\begin{align*}
			x:=\essinf_{\widetilde{\lambda}\in X_\lambda}\Vert \lambda-\widetilde{\lambda}\Vert_2 
		\end{align*}
is attained and the minimizer is unique. Thus, notice that the set of random variables $\{\Vert \lambda-\widetilde{\lambda}\Vert_2\mid \widetilde{\lambda}\in X_\lambda\}$ is downward directed. Indeed,
for each $\lambda_1,\lambda_2\in X_\lambda$, one has
$\Vert \lambda-\lambda_3\Vert_2 = \Vert \lambda-\lambda_1\Vert_2\wedge\Vert \lambda-\lambda_2\Vert_2$, where  
		\begin{align*}
		X_\lambda\ni\lambda_3:=\mathbbm{1}_{\{\Vert \lambda-\lambda_1\Vert_2\leq \Vert \lambda-\lambda_2\Vert_2\}}\lambda_1+\mathbbm{1}_{\{\Vert \lambda-\lambda_1\Vert_2> \Vert \lambda-\lambda_2\Vert_2\}}\lambda_2.
		\end{align*}
Hence, there is a sequence of random variables $(\lambda_n)_n\subseteq X_\lambda$ such that $\Vert \lambda-\lambda_n\Vert_2\to x$\ $\mathbb{P}\text{-a.s.}$ for $n\to\infty$. From the parallelogram law 
(see, e.g., Lemma 6.51 in \cite{aliprantis2006infinite}) 
of the Euclidean norm on $\mathbb{R}^{d\times d}$ and the convexity of $X_\lambda$, we obtain
		\begin{align}\label{8.10.2018.1}
		\Vert \lambda_n-\lambda_m\Vert_2^2 & = 2\Vert \lambda-\lambda_n\Vert_2^2+2\Vert \lambda-\lambda_m\Vert_2^2-4 \Vert \lambda-\frac{\lambda_n+\lambda_m}{2}\Vert_2^2\nonumber\\
		&\leq 2\Vert \lambda-\lambda_n\Vert_2^2+2\Vert \lambda-\lambda_m\Vert_2^2-4x\quad \mathbb{P}\text{-a.s.}
		\end{align}
(\ref{8.10.2018.1}) implies that $(\lambda_n)_{n\in\mathbb{N}}$ converges  $\mathbb{P}\text{-a.s.}$ to some element of $L^0(\mathbb{R}^{d\times d}_+,\mathcal{F}_t)$. By the closedness of $X_\lambda$, one derives the existence. Uniqueness in the postulated sense also follows from the estimate (\ref{8.10.2018.1}).

This means that the mappings $p_t$ and $q_t$ are well-defined and it remains to show that they satisfy the properties.\\

Ad \eqref{item:ProjectionProperty1}: Let $\mu\geq 0$ be a $\mathcal{F}_t$-measurable random variable. As a consequence of $\mathcal{R}_t$ and $L^0(\mathbb{R}^{d\times d}_+,\mathcal{F}_t)$ being closed under multiplication with non-negative $\mathcal{F}_t$-measurable random variables, we have $X_{\mu\lambda}= \{\mu\widetilde{\lambda}\ |\ \widetilde{\lambda}\in X_\lambda\}$. Then, the assertion follows from the construction of $p_t$ from above.\\

Ad  \eqref{item:ProjectionProperty2}: Let $\lambda\in L^0(\mathbb{R}^{d\times d}_+,\mathcal{F}_t)$. We have $p_t(\lambda)+p_t(q_t(\lambda))\in \mathcal{R}_t+\mathcal{R}_t\subseteq \mathcal{R}_t$ and $\lambda-\left(p_t(\lambda)+p_t(q_t(\lambda))\right)=q_t(\lambda)-p_t(q_t(\lambda))\in L^0(\mathbb{R}^{d\times d}_+,\mathcal{F}_t)$ by definition of $p_t$, thus in particular 
\begin{align}\label{9.10.2018.1}
p_t(\lambda)+p_t(q_t(\lambda))\in X_\lambda.
\end{align}
On the other hand, one has
		\begin{align} \label{eq:ProofProjectionProperties}
	\Vert \lambda-(p_t(\lambda)+p_t(q_t(\lambda))\Vert_2=\Vert q_t(\lambda)-p_t(q_t(\lambda))\Vert_2
	\leq\Vert q_t(\lambda)\Vert_2
	=\Vert \lambda-p_t(\lambda)\Vert_2\quad \mathbb{P}\text{-a.s.},
		\end{align}
where the inequality holds since $p_t(q_t(\lambda))$ is the optimal reversible part of $q_t(\lambda)$.  By 
(\ref{eq:ProofProjectionProperties}), (\ref{9.10.2018.1}), and the uniqueness of the
optimal reversible part in the decomposition of $\lambda$, it follows that $p_t(\lambda)+p_t(q_t(\lambda)) =p_t(\lambda)$\ $\mathbb{P}\text{-a.s.}$ and thus
\begin{align}\label{9.10.2018.2}
\mathrm{Image}(p_t\circ q_t)=\{0\}.
\end{align}
The assertion immediately follows from (\ref{9.10.2018.2}).\\

Ad (iii): Follows immediately from (ii).
\end{proof}
	
\begin{proof}[Proof of Lemma~\ref{Lemma:ClosedImQ}]
We have to show that
\begin{align}\label{10.10.2018.1}
\lambda_n\to \lambda\quad \mathbb{P}\text{-a.s.}\quad\implies\quad
p_t(\lambda_n)\to p_t(\lambda)\quad \mathbb{P}\text{-a.s.}
\end{align}
The property that $\mathrm{Image}(q_t)$ is closed in probability
immediately follows  from Lemma~\ref{Lemma:ProjectionProperties} (ii) 
and (\ref{10.10.2018.1}) by passing to an almost surely converging subsequence. To show (\ref{10.10.2018.1}), we define
for each $n\in\bbn$ the $\mathcal{F}_t$-measurable real-valued random variable 
\begin{align}\label{mu}
\mu_n(\omega):= 1\wedge \inf_{\substack{1\leq i,j\leq d\\ {p_t(\lambda)}^{ij}(\omega)>0}}\frac{\lambda_n^{ij}(\omega)}{{p_t(\lambda)}^{ij}(\omega)}.
\end{align}
One has that $\mu_n p_t(\lambda)\in\mathcal{R}_t$ and $\lambda_n-\mu_n p_t(\lambda)\in L^0(\mathbb{R}^{d\times d}_+,\mathcal{F}_t)$, 
i.e.,  $\mu_n p_t(\lambda)\in X_{\lambda_n}$. This means that we compress the 
transfer matrix~$p_t(\lambda)$ to use it for a (in general not optimal) decomposition of $\lambda_n$ into a reversible and a non-reversible part. Note that in the trivial case that $p_t(\lambda)^{ij}(\omega)=0$ for all $(i,j)$, the compression is irrelevant, here one has 
$\mu_n(\omega)=1$. As $p_t(\lambda_n)$ is the optimal reversible part of $\lambda_n$, it follows that
\begin{align}\label{14.10.2018.1}
\Vert \lambda_n-p_t(\lambda_n)\Vert_2
\le \Vert \lambda_n-\mu_n p_t(\lambda)\Vert_2\quad \mathbb{P}\text{-a.s.}\quad \forall n\in\bbn.
\end{align}
In addition, by 
$\lambda^{ij}\ge p_t(\lambda)^{ij}\ge 0$ for all $i,j=1,\ldots,d$
and $\lambda_n\to\lambda$, we have that $\mu_n\to 1$\ $\mathbb{P}\text{-a.s.}$ Combining this with the triangle inequality of the Euclidean norm, we arrive at 
\begin{align*}
\limsup_{n\to\infty}\Vert \lambda-p_t(\lambda_n)\Vert_2 = 
\limsup_{n\to\infty}\Vert \lambda_n-p_t(\lambda_n)\Vert_2
\le
\limsup_{n\to\infty}\Vert \lambda_n-\mu_n p_t(\lambda)\Vert_2 = 
\Vert \lambda-p_t(\lambda)\Vert_2 \ \mathbb{P}\text{-a.s.},
\end{align*}
where the inequality follows from (\ref{14.10.2018.1}).
Since $p_t(\lambda)$ is the optimal reversible part of $\lambda$, this just means that
\beam\label{11.10.2018.2}
 \Vert \lambda-p_t(\lambda_n)\Vert_2 \to \Vert \lambda-p_t(\lambda)\Vert_2,\quad n\to \infty,\quad \mathbb{P}\text{-a.s.}
\eeam
To complete the proof, we define the  $\mathcal{F}_t$-measurable random variable $\eps(\omega):=\sup\{1/k\ |\ k\in\bbn,\ 
 \Vert p_t(\lambda_n)-p_t(\lambda)\Vert_2(\omega)\ge 1/k\ \mbox{for infinitely many}\ n\}$ that is  strictly positive on the set~$A:=\{p_t(\lambda_n)\not\to p_t(\lambda)\}\in\mathcal{F}_t$.
 Then, we construct the random subsequence~$(\tau_k)_{k\in\bbn}$ recursively by $\tau_0:=0$ and $\tau_k:= \inf\{n\in\bbn\ |\ n>\tau_{k-1},\  \Vert p_t(\lambda_n)-p_t(\lambda)\Vert_2\ge \eps\}$ on $A$ and $\tau_k:=k$ on $\Omega\setminus A$. By construction, we have that  
\beam\label{11.10.2018.1}
P\left( \Vert p_t(\lambda_{\tau_k})-p_t(\lambda)\Vert_2\ge \eps,\ \forall k\in\bbn\ |\ A\right)=1.
\eeam
By $\lambda_n\to\lambda$ and $0\le p_t(\lambda_n)^{ij}\le \lambda^{ij}_n$, one has $\sup_{n\in\bbn}\Vert p_t(\lambda_n)\Vert_2
\le \sup_{n\in\bbn}\Vert \lambda_n\Vert_2<\infty\ \mathbb{P}\text{-a.s.}$
Thus, by Lemma~\ref{Lemma: MeasurableSubsequence}, 
there exists a random subsequence~$(\wt{\tau}_k)_{k\in\bbn}$ of $(\tau_k)_{k\in\bbn}$ and an $f\in L^0(\mathbb{R}^{d\times d}_+,\mathcal{F}_t)$ s.t.  $p_t(\lambda_{\wt{\tau}_k})\to f\ \mathbb{P}\text{-a.s.}$
Together with (\ref{11.10.2018.2}), this implies that  $\Vert \lambda-f\Vert_2 = \Vert \lambda-p_t(\lambda)\Vert_2\ \mathbb{P}\text{-a.s.}$. In addition, we have $f\in X_\lambda$.
On the other hand, by (\ref{11.10.2018.1}), $f\not= p_t(\lambda)$ on $A\ \mathbb{P}\text{-a.s.}$ Since $p_t(\lambda)$ is the unique optimal reversible part of $\lambda$ in the sense of Lemma~\ref{Lemma:ProjectionProperties}, these two properties can only hold simultaneously if $P(A)=0$ and we are done.
\end{proof} 
\begin{Remark}
We note that for the proof of Theorem~\ref{theo:MainResult1}, we only need the weaker assertion that $\mathrm{Image}(q_t)$ is closed in probability. To show this assertion, 
one can restrict oneself to sequences with $\lambda_n=q_t(\lambda_n)$, i.e., $p_t(\lambda_n) = 0$, for all $n\in\bbn$, and the above proof would already be completed with (\ref{11.10.2018.2}).
\end{Remark}
	We are now in the position to prove Theorem~\ref{theo:MainResult1}. 
	As in Kabanov-R\'{a}sonyi-Stricker~\cite{kabanov2003closedness}	
we argue by induction on the periods. The key difference is that reversible orders are postponed
to later periods, instead of being executed and compensated in the same period.	The later is not possible  since the null-strategies do not form a linear space.
	\begin{proof}[Proof of Theorem \ref{theo:MainResult1}] 
		Assume that the bid-ask-process $(\Pi_t)_{t=0}^T$ satisfies (NA\textsuperscript{ps}). Let us prove by a backward induction on $t=T,T-1,\ldots,0$ that $\mathcal{A}_t^T$ is closed in probability. The induction basis $t=T$ is trivial since $\mathcal{A}_T^T$ coincides with $L^0(-K_T,\mathcal{F}_T)$ which is closed in probability. 

Induction step $t+1\leadsto t$: We assume that $\mathcal{A}_{t+1}^T$ is closed in probability for some $t\leq T-1$ and have to show that $\mathcal{A}_t^T$ is closed too. Therefore, let $(\xi_n)_{n\in\mathbb{N}}$ be a sequence in $\mathcal{A}_t^T$ which converges to some $\xi\in L^0(\mathbb{R}^d)$ in probability. Obviously, we may assume that $\xi_n\to\xi$ almost surely by passing to a subsequence. We have to show that $\xi\in \mathcal{A}_t^T$. 

\textit{Step 1.} According to Lemma~\ref{Lemma:Representations}, we may write
		\begin{align}\label{13.10.2018.1}
			\xi_n=\sum_{s=t}^{T}L_s(\lambda_s^n)-r^n, \quad n\in\mathbb{N},
		\end{align}
		where $(\lambda_s^n)_{n\in\mathbb{N}}\subseteq L^0(\mathbb{R}^{d\times d}_+,\mathcal{F}_s)$ for each $s=t,\dots,T$ and $(r^n)_{n\in\mathbb{N}}\subseteq L^0(\mathbb{R}^d_+)$. 
Under the induction hypothesis that $\mathcal{A}_{t+1}^T$ is closed in probability, we apply Lemma \ref{Lemma:ProjectionProperties} in order to decompose $\lambda_t^n$ into $p_t(\lambda_t^n)+q_t(\lambda_t^n)$ and thus
\beao
L_t(\lambda_t^n)=L_t(p_t(\lambda_t^n))+L_t(q_t(\lambda_t^n)),
\eeao
where $p_t(\lambda_t^n)$ is reversible and $q_t(\lambda_t^n)$ is purely non-reversible. This means that $L_t(p_t(\lambda^n_t))\in \mathcal{A}_0^{t}\cap(-\mathcal{A}_{t+1}^T)$. The prospective strict no-arbitrage (NA\textsuperscript{ps}) property implies that 
$\mathcal{A}_0^t\cap\left(-\mathcal{A}_{t+1}^T\right)\subseteq \mathcal{A}_0^{t+1}\cap\left(-\mathcal{A}_{t+1}^T\right)\subseteq \mathcal{A}_{t+1}^T$ and thus $L_t(p_t(\lambda_t^n))\in \mathcal{A}_{t+1}^T$. This allows us to rewrite (\ref{13.10.2018.1}) as
\beao
\xi_n= L_t(q_t(\lambda_t^n))+L_t(p_t(\lambda_t^n))
+\sum_{s=t+1}^{T}L_s(\lambda_s^n)-r^n
=: L_t(q_t(\lambda_t^n)) +x_n
\eeao
with $x_n\in\mathcal{A}_{t+1}^T$. Hence, from now on we can assume w.l.o.g. that $(\lambda_t^n)_{n\in\mathbb{N}}\subseteq \mathrm{Image}(q_t)$.

		\textit{Step 2.} Our next goal is to show that 
		\begin{align}\label{eq:ProofMainTheoClaim}
P(A)=0,\quad\mbox{where}\quad A:=\{\limsup_{n\to\infty}\Vert\lambda^n_t\Vert_2=\infty\}.
		\end{align} 
By Lemma~\ref{Lemma: MeasurableSubsequence}, we may pass to a measurable subsequence $(\tau_k)_{k\in\mathbb{N}}$ such that for a.e. $\omega\in A$ we have $\lambda^{\tau_k(\omega)}_t(\omega)\neq 0$ for all $k\in\bbn$ and $\lim_{k\to\infty}\Vert \lambda^{\tau_k(\omega)}_t(\omega)\Vert_2=\infty$. Then, by the stability of $\mathrm{Image}(q_t)$ under multiplication with non-negative $\mathcal{F}_t$-measurable scalars (see Lemma~\ref{Lemma:ProjectionProperties}(i)), we find that $\widetilde{\lambda}_t^n:=\frac{\lambda_t^{\tau^{n}}}{\Vert \lambda_t^{\tau_n}\Vert_2}\mathbbm{1}_A$ belongs to $\mathrm{Image}(q_t)$  and, in addition, we define
		\begin{align*}
			\widetilde{\lambda}_s^n:=\frac{\lambda_s^{\tau_n}}{\Vert \lambda_t^{\tau_n}\Vert_2}\mathbbm{1}_A\in L^0(\mathbb{R}^{d\times d}_+,\mathcal{F}_s)\quad\text{for}\ s=t+1,\dots,T\quad \text{and}\quad \widetilde{r}^n:=\frac{r^n}{\Vert \lambda_t^{\tau_n}\Vert_2}\mathbbm{1}_A\in L^0(\mathbb{R}^d_+).
		\end{align*}  
We have $\sum_{s=t}^{T}L_s(\widetilde{\lambda}^n_s)-\widetilde{r}^n = \mathbbm{1}_A \xi_{\tau_n}/\Vert \lambda_t^{\tau_n}\Vert_2\to 0$ a.s. Now, we may apply once again Lemma~\ref{Lemma: MeasurableSubsequence} to find a measurable subsequence $(\sigma_k)_{k\in\mathbb{N}}$ such that
		\begin{align}
			\widetilde{\lambda}_t:=\lim\limits_{k\to\infty}\widetilde{\lambda}^{\sigma_k}_t
		\end{align}
		exists and $\Vert \widetilde{\lambda}_t\Vert_2=\lim\limits_{k\to\infty}\Vert\widetilde{\lambda}^{\sigma_k}_t\Vert_2$.
		Consequently, $L_t(\widetilde{\lambda}^{\sigma_k}_t)\to L_t(\widetilde{\lambda}_t)$ and thus the sequence 
\beao
\left(\sum_{s=t+1}^{T}L_s(\widetilde{\lambda}^{\sigma_k}_s)-\widetilde{r}^{\sigma_k}\right)_{k\in\mathbb{N}}\subseteq \mathcal{A}_{t+1}^T
\eeao 
converges to $-L_t(\widetilde{\lambda}_t)$. Since $\mathcal{A}_{t+1}^T$
is closed and due to Lemma~\ref{Lemma:Representations}, the limit can be written 
as $\sum_{s=t+1}^{T}L_s(\widetilde{\lambda}_s)-\widetilde{r}$, i.e., we have

		\begin{align*}
			L_t(\widetilde{\lambda}_t)+\sum_{s=t+1}^{T}L_s(\widetilde{\lambda}_s)-\widetilde{r}=0\quad \mathbb{P}\text{-a.s.}
		\end{align*}  
with $\widetilde{\lambda}_s\in 	 L^0(\mathbb{R}^{d\times d}_+,\mathcal{F}_s)$ and 	
$\widetilde{r}\in L^0(\mathbb{R}^d_+,\mathcal{F}_s)$. Thus we have 
that $\widetilde{\lambda}_t$ is reversible, i.e., $\widetilde{\lambda}_t\in \mathcal{R}_t=\mathrm{Image}(p_t)$. However, on the other hand, the sequence $(\widetilde{\lambda}^{\sigma_k}_t)_{k\in\mathbb{N}}$ belonged to $\mathrm{Image}(q_t)$, thus, by Lemma \ref{Lemma:ClosedImQ}, $\widetilde{\lambda}_t\in\mathrm{Image}(q_t)$. Therefore $\widetilde{\lambda}_t\in \mathrm{Image}(p_t)\cap \mathrm{Image}(q_t)$, hence $\widetilde{\lambda}_t=0$ a.s. according to 
		Lemma~\ref{Lemma:ProjectionProperties} \eqref{item:ProjectionProperty3}. 
		Since $\mathbb{P}(A)=\mathbb{P}(\widetilde{\lambda}_t\neq 0)$,
this is only possible if $P(A)=0$, i.e., \eqref{eq:ProofMainTheoClaim} holds true.

		\textit{Step 3.} According to step 2, we can apply Lemma \ref{Lemma: MeasurableSubsequence} to find a measurable subsequence $(\tau_k)_{k\in\mathbb{N}}$ such that $\lambda^{\tau_k}_t\to \lambda_t \in L^0(\mathbb{R}^{d\times d}_+,\mathcal{F}_t)$\ $\mathbb{P}$-a.s. for $k\to\infty$ and, consequently, $L_t(\lambda^{\tau_k}_t)\to L_t(\lambda_t)$ a.s. Hence, $\sum_{s=t+1}^{T}L_s(\lambda_s^{\tau_k})-r^{\tau_k}$ converges a.s. to $\xi-L_t(\lambda_t)$, which, by the induction hypothesis, belongs to $\mathcal{A}_{t+1}^T$. This implies that $\xi\in \mathcal{A}_t^T$. 
	\end{proof}

	Finally, we finish up the remaining proofs. Notice that every result is a standard consequence of the set $\mathcal{A}_0^T$ being closed in probability under the (NA\textsuperscript{ps}) condition, hence we only give the respective references and point out where some changes are needed. 
	\begin{proof}[Proof of Corollary \ref{cor:MainResult1}]
	It suffices to repeat the arguments on page 29 between lines 5-33 of the proof of Theorem 2.1 in \cite{schachermayer2004fundamental} with $\mathcal{A}_0^T$ (instead of $\widetilde{\mathcal{A}}_T$), which is closed by Theorem~\ref{theo:MainResult1}. 
	\end{proof}
	\begin{proof}[Proof of Theorem \ref{theo:MainResult2}] 
	
(NA\textsuperscript{wps})$\Rightarrow \exists$ CPS: According to the (NA\textsuperscript{wps}) condition there is a bid-ask process $(\widetilde{\Pi}_t)_{t=0}^T$ with $\widetilde{\Pi}_t\leq \Pi_t$  a.s. for all $t=0,\dots,T$ satisfying (NA\textsuperscript{ps}). Corollary \ref{cor:MainResult1} implies that $(\widetilde{\Pi}_t)_{t=0}^T$ admits a CPS, which is obviously a CPS for $(\Pi_t)_{t=0}^T$ as well.
	
$\exists$ CPS $\Rightarrow$ (NA\textsuperscript{wps}):
		It is again sufficient to repeat the arguments on page 30 between lines 1-12 of the proof of Theorem 2.1 in \cite{schachermayer2004fundamental} to define a frictionless bid-ask process $(\widetilde{\Pi}_t)_{t=0}^T$, i.e., $\wt{\pi}_t^{ij}=1/\wt{\pi}_t^{ji}$, with $\widetilde{\Pi}_t\leq \Pi_t$ a.s. for all $t=0,\dots,T$ satisfying (NA), which in the frictionless case coincides with (NA\textsuperscript{ps}) by Proposition~\ref{prop:Comparision}. Thus $(\Pi_t)_{t=0}^T$ satisfies (NA\textsuperscript{wps}).
	\end{proof}
	\begin{proof}[Proof of Corollary \ref{cor:MainResult2}]
		This is a well known consequence of the existence of a consistent price system. We may, for example, use Proposition 3.2.6 in \cite{kabanov2009markets} to see that the existence of a consistent price system implies $\overline{\mathcal{A}_0^T}\cap L^0(K_T,\mathcal{F}_T)\subseteq L^0(\partial K_T,\mathcal{F}_T)$. To complete the 
proof, we observe that  $\partial K_T\cap\mathbb{R}^d_+=\{0\}$. Indeed, by 
$\pi^{ij}<\infty$, the existence of a $v\in \mathbb{R}^d_+\setminus \{0\}$ and a sequence $(v_n)_{n\in\bbn}\subseteq \bbr^d\setminus  K_T(\omega))$ with $v_n\to v$ can easily be led to a contradiction.
\end{proof}
\begin{Remark}
	Our results can be extended to the Kabanov model as defined in Subsection~3.2 of \cite{kabanov2009markets}, which, in addition to the barter market considered here, also covers a wider range of models, e.g., models of a barter market where a bank account is charged the transaction costs and models where baskets of assets are exchanged. To see this, we briefly highlight the minor adjustments. On the other hand, the proofs of Lemmas~\ref{Lemma:ProjectionProperties} and \ref{Lemma:ClosedImQ} are
	heavily based on the polyhedral structure of the solvency cones. The key argument that $\mu_n$ defined in (\ref{mu}) converges to $1$ does not work for general closed solvency cones. 
	
	The Kabanov model is defined as follows. Let $((X_t^i)_{t=0}^T)_{i\in\bbn}$ be a sequence of adapted $\mathbb{R}^d$-valued processes, such that for all $t$ and $\omega$ the set $\{i\in\mathbb{N}\mid X^i_t(\omega)\neq 0\}$ is non-empty and finite, and set 
	\begin{align*}
	-K_t(\omega):= \mathrm{cone}(X^i_t(\omega)\mid i\in\mathbb{N}).
	\end{align*}
	In this case, $K=(K_t)_{t=0}^T$ defined by $K_t(\omega):=-(-K_t(\omega))$ is called a cone-valued process. In addition, we assume $\mathbb{R}^d_+\setminus\{0\}\subseteq \mathrm{int } K_T(\omega)$ for all $\omega$ (which corresponds to the possibility to freely dispose of assets and $\pi^{ij}_T<\infty$ for all $i,j$ in the base model) and $-K_T(\omega)\cap \bbr^d_+ = \{0\}$ (which corresponds to  $\pi^{ij}_T\le \pi^{ik}_T\pi^{kj}_T$ and $\pi^{ii}_T=1$ for all $i,j,k$). The cone of hedgeable claims attainable from zero endowment by trading between $s$ and $t$ is given by  $\mathcal{A}_s^t=\sum_{k=s}^{t}L^0(-K_k,\mathcal{F}_k),\ s\leq t$. 
	The (NA) and (NA\textsuperscript{ps}) conditions are defined 
	accordingly. We now sketch how the arguments of the previous proofs can be applied in this more general setting. Let $I_t(\omega):=\sup\{n\in\mathbb{N}\mid  X^n_t(\omega)\not=0\}$ for $\omega\in\Omega$ and $t=0,\dots,T$. The assumptions above guarantee that $I_t$ is a $\mathbb{N}$-valued $\mathcal{F}_t$-measurable random variable. 
	By $\Omega=\bigcup_{I\in\mathbb{N}}\{I_t=I\}$, the arguments from 
	Lemmas~\ref{Lemma:Representations}, \ref{Lemma:ProjectionProperties}, and \ref{Lemma:ClosedImQ} can be separately applied on the sets $\{I_t=I\}$ for $I\in\mathbb{N}$. In particular, portfolio changes can be represented by
	$L_t(\lambda_t):= \sum_{i=1}^{I_t}\lambda^i_tX^i_t$, where  $\lambda^i_t\in L^0(\mathbb{R}_+,\mathcal{F}_t)$. The Euclidean norm on $\bbr^{d\times d}$ 
	that is used for the decomposition of an order into the reversible and the purely non-reversible part is replaced by 
	\beao
	\Vert \lambda\Vert_2:=\sqrt{\sum_{i=1}^{I}(\lambda^i)^2}\quad \text{on}\ \{I_t=I\}.
	\eeao
	With these adjustments, Theorem~\ref{theo:MainResult1} extends to the Kabanov model
	by arguing along the lines of the original proofs. We note again that it is crucial that for fixed $\omega$, only linear combinations 
	from finitely many $X^i_t(\omega)$ have to be considered. 
	
	At last, we say that $K=(K_t)_{t=0}^T$ satisfies the (NA\textsuperscript{wps}) property if there is a cone-valued process $\wt{K}=(\wt{K}_t)_{t=0}^T$ with the (NA\textsuperscript{ps}) property such that $K_t(\omega)\subseteq \wt{K}_t(\omega)$ for all $\omega$ and $t$. Then, Theorem~\ref{theo:MainResult2} holds true in the Kabanov model as well. Indeed, (NA\textsuperscript{wps}) implies the existence of a consistent price system for $K=(K_t)_{t=0}^T$ and, on the other hand, given a CPS $Z=(Z_t)_{t=0}^T$, the cone-valued process $\wt{K}=(\wt{K}_t)_{t=0}^T$ defined by $\wt{K}_t(\omega):=(\mathbb{R}_+Z_t(\omega))^\star$ satisfies (NA\textsuperscript{ps}) and $K_t(\omega)\subseteq \wt{K}_t(\omega)$ for all $t$ and $\omega$, i.e., $K$ satisfies (NA\textsuperscript{wps}).\\
	
	In addition, our reasoning to show the closedness of $\mathcal{A}_0^T$ can also be applied to models with incomplete information such
	as those considered in \cite{bouchard2006no,de2007no}, where arguing on the level of orders is quite natural.
\end{Remark}	
\section{(Counter-)Examples}\label{sec:Examples}

We start with two very simple examples that illustrate the difference between (NA\textsuperscript{r}), (NA\textsuperscript{ps}), and (NA\textsuperscript{wps}) and the need to consider CPSs which do not lie in the relative interior of the bid-ask spread. 

	\begin{Example}[(NA\textsuperscript{ps}) $\not\Rightarrow$ (NA\textsuperscript{r})]\label{example:WPros_Pros_Robust} We consider a two-asset one-period model with a bank account that does not pay interest and one stock with bid-price $(\underline{S}_t)_{t=0,1}$ and ask-price $(\overline{S}_t)_{t=0,1}$. The deterministic prices are illustrated in Figure~\ref{fig:NApNotNAr} below.
In this special case, 
a (strictly) consistent price system corresponds to a pair~$(\widetilde{S},\mathbb{Q})$ consisting of a measure $\mathbb{Q}\sim \mathbb{P}$ and a $\mathbb{Q}$-martingale $\widetilde{S}$ taking its values in (the relative interor of) $[\underline{S},\overline{S}]$. 
For more details regarding models with a bank account see, e.g., Section 3 in \cite{rokhlin2008constructive}.
 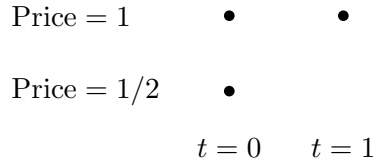
\begin{figure}[H] 
			\centering
			\begin{tikzpicture}
				\fill (1,0) circle (2pt);
				\fill (1,1) circle (2pt);
				\fill (2.5, 1) circle (2pt);
				\node at (1, -0.75) {$t=0$};
				\node at (2.5, -0.75) {$t=1$};
				\node[align=left, right] at (-2,0) {$\mathrm{Price}=1/2$};
				\node[align=left, right] at (-2,1) {$\mathrm{Price}= 1$};
			\end{tikzpicture}
			\caption{Deterministic model satisfying (NA\textsuperscript{ps}) and (NA\textsuperscript{s}), but not (NA\textsuperscript{r}). At $t=0$ the bid-price $\underline{S}_0$ equals $1/2$ and the ask price $\overline{S}_0$ equals $1$; at $t=1$ the market is frictionless with price $\underline{S}_1=\overline{S}_1=1$.}
			\label{fig:NApNotNAr}
		\end{figure}
		Obviously, the market admits the unique consistent price process $\widetilde{S}\equiv 1$. $\wt{S}$ is not a strictly consistent price process since $1\notin (1/2,1)$. Hence, the model cannot satisfy (NA\textsuperscript{r}). Also the Penner-condition, i.e., $L^0(K^0_t,\mathcal{F}_{t-1})\subseteq L^0(K^0_{t-1},\mathcal{F}_{t-1})$ for all $t$, is not satisfied. On the other hand, we have that
$\mathcal{A}_0^0\cap(-\mathcal{A}_1^1) =\mathrm{cone}\left(e^2-e^1\right)\subseteq \mathrm{cone}\left(e^2-e^1, e^1-e^2, -e^1, -e^2\right)
 = \mathcal{A}_1^1$, i.e., only a long stock position built up at time~$0$ can be liquidated without losses at time~$1$, but the purchase of the stock (asset~$2$) can also be postponed to time~$1$. Thus the model satisfies (NA\textsuperscript{ps}).  
\end{Example}

\begin{Example}[(NA\textsuperscript{wps}) $\not\Rightarrow$ (NA\textsuperscript{ps})]\label{example:WPros_Pros} We consider the following variant of Example~\ref{example:WPros_Pros_Robust}:
		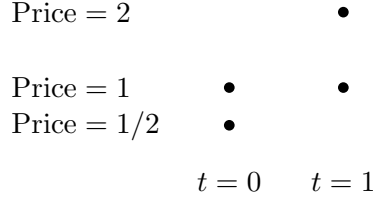
\begin{figure}[H]
			\centering
			\begin{tikzpicture}
				\fill (1,0) circle (2pt);
				\fill (1,0.5) circle (2pt);
				\fill (2.5, 0.5) circle (2pt);
				\fill (2.5, 1.5 ) circle (2pt);
				\node at (1, -0.75) {$t=0$};
				\node at (2.5, -0.75) {$t=1$};
				\node[align=left, right] at (-2,0) {$\mathrm{Price}=1/2$};
				\node[align=left, right] at (-2,0.5) {$\mathrm{Price}= 1$};
				\node[align=left, right] at (-2,1.5) {$\mathrm{Price}= 2$};
			\end{tikzpicture}
			\caption{Deterministic model satisfying (NA\textsuperscript{wps}) but not (NA\textsuperscript{ps}). One has $\underline{S}_0=1/2$, $\overline{S}_0=1$, $\underline{S}_1=1$, and $\overline{S}_1=2$.}
			\label{fig:NAwpNotNAp}
		\end{figure}
		The market still admits the unique consistent price 
		process~$\widetilde{S}\equiv 1$, but now fails (NA\textsuperscript{ps}) 
since $\mathcal{A}_0^0\cap(-\mathcal{A}_1^1) =\mathrm{cone}\left(e^2-e^1\right)\not\subseteq \mathrm{cone}\left(e^2-2 e^1, e^1-e^2, -e^1, -e^2\right)
 = \mathcal{A}_1^1$. On the other hand, the model satisfies (NA\textsuperscript{wps})
since the more favorable bid-ask process in Figure~\ref{fig:NApNotNAr} satisfies (NA\textsuperscript{ps}). 
\end{Example}

	Finally, we provide an example showing that (NA\textsuperscript{wps}) cannot be replaced by the ``next weaker'' condition that there exists a more favorable market, i.e., a bid-ask process $(\widetilde{\Pi}_t)_{t=0}^T$ with $\widetilde{\Pi}_t\leq \Pi_t$ for each $t=0,\dots,T$, such that $(\widetilde{\Pi}_t)_{t=0}^T$ satisfies (NA) and 
	\begin{align}
	\mathcal{A}_0^t\cap\left(-\widetilde{\mathcal{A}}_t^T\right)\subseteq \widetilde{\mathcal{A}}_t^T\quad \mbox{for all } t=0,\dots,T\label{eq:ExampleOtherCondi}
	\end{align}
(cf. Remark~\ref{20.11.2018.2}). We show that there is a bid-ask process~$(\Pi_t)_{t=0}^3$ with four assets satisfying condition \eqref{eq:ExampleOtherCondi} which allows for an approximate arbitrage. Hereby, in the spirit of the basic Example~3.1 of Schachermayer~\cite{schachermayer2004fundamental}, which can be used to achieve an approximate arbitrage, the example is based on the idea of two {\em consecutive} approximate hedges. There exists a more favorable bid-ask process $(\widetilde{\Pi}_t)_{t=0}^3$ s.t. (\ref{eq:ExampleOtherCondi}) holds, but which only turns the first approximate hedge into a perfect hedge and thus the model still satisfies (NA).

The example highlights the importance of a possible ``cascade'' of approximate hedges, which is, to the best of our knowledge, a phenomenon not discussed in the previous literature. It is also of interest for the discussion of adjusted bid-ask processes as introduced in Jacka, Berkaoui, and Warren~\cite{jacka2008no}(see Remark~\ref{26.11.2018.1}). 

\begin{Example}[A cascade of approximate hedges]\label{example:Strengthofthecondition}
		Let $T=3$, $\Omega=\mathbb{N}^2\times \left\{-1/2,1/2\right\}^2$, $\mathcal{F}=2^\Omega$ and all states have positive probability.
In addition, the information structure is given by $\mathcal{F}_0=\{\emptyset,\Omega\}$, \begin{align*}
		\mathcal{F}_1&=\sigma\left(\{\{(n,m,i,j)\mid (m,i,j) \in \mathbb{N}\times\{-1/2, 1/2\}^2\}\mid n\in\mathbb{N}\}\right),\\ \mathcal{F}_2&=\sigma\left( \{\{(n,m,i,-1/2), (n,m,i,1/2)\}\mid (n,m,i)\in \mathbb{N}^2\times \{-1/2,1/2\}\}\right), 
		\end{align*}
		and $\mathcal{F}_3=2^\Omega=\mathcal{F}$. This means $n$ is revealed at time $1$, $m$ and $i$ are revealed at time $2$ and, at last, $j$ is revealed at time $3$. Next, we define a bid-ask process $(\Pi_t)_{t=0}^3$
		depending on parameter $a>0$ for $t=0,1$ as follows
		\begin{align*}
		&\Pi_0=
		\begin{pmatrix}
		1 & 1 & 1 & 1\\
		a & 1 & \cdot & \cdot\\
		a & \cdot & 1& \cdot\\
		a & \cdot & \cdot &1 
		\end{pmatrix},\quad 
		\Pi_1\equiv
		\begin{pmatrix}
		1 & a & a & a\\
		a & 1 & \cdot & \cdot\\
		a & \cdot & 1& \cdot\\
		1 & \cdot& \cdot & 1 
		\end{pmatrix}
		\end{align*}
		and for $t=2,3$ depending on the state $(n,m,i,j)\in\mathbb{N}^2\times\{-1/2,1/2\}^2$ as
		\begin{align*}
		\Pi_2(n,m,i,j)=
		\begin{pmatrix}
		1 & a & a & a\\
		a & 1& \cdot & \cdot\\
		\frac{1}{1+i} & \cdot & 1& \cdot\\
		\frac{1}{1-\frac{i}{n}} & \cdot& \cdot & 1 
		\end{pmatrix},\quad
		\Pi_3(n,m,i,j)=
		\begin{pmatrix}
		1 & a & a & a\\
		\frac{1}{1+\frac{1}{4}+j} & 1 & \cdot & \cdot\\
		\frac{1}{1+i}\frac{1}{1-\frac{j}{m}} & \cdot & 1& \cdot\\
		a & \cdot& \cdot & 1 
		\end{pmatrix}.
		\end{align*}
		The missing entries are specified via the direct transfer over the first asset, i.e., $\pi^{ij}_t:=\pi^{i1}_t\pi^{1j}_t$ for $2\leq i\neq j\leq 4$. This means that the first asset plays the role of a money market account and the assets 2,3, and 4 represent risky stocks. Finally, we choose the parameter~$a$ prohibitively high such that the corresponding transfers are unattractive, more precisely, we set $a:=5>4$.
		 
The market is actually frictionless with special short- and long-selling constraints. Asset~$2$
yields the random return~$1/4+j$, $j\in\{-1/2,1/2\}$.  It can be approximately hedged, yielding an extra profit, by the return of asset~3 between time 2 and time 3, that reads $-j/m$. 
On the other hand, asset~3 has to be bought already at time~0 which leads to the 
prior random return~$i$, $i\in\{-1/2,1/2\}$. The latter return can be approximately hedged by asset~4. This means that there is a cascade of approximate hedges -- hedge asset~2 by asset~3 and asset~3 by asset~4 -- leading to an approximate arbitrage. 
		
{\em Step 1.} Let us show that $(\Pi_t)_{t=0}^3$ allows for an approximate arbitrage, i.e., 
$\overline{\mathcal{A}_0^3}\cap L^0(\mathbb{R}^4_+)\supsetneq\{0\}$. Therefore, we define for fixed $k\in\mathbb{N}$ the following strategy. For $t=0$, we set
		\begin{align*}
		\xi_0^{k}= e^2-e^1+k(e^3-e^1)+k^2(e^4-e^1)\in -K_0.
		\end{align*}
		For $t=1$, we define 
		\begin{align*}
		\xi_1^{k}(n,m,i,j)=\left[k^2-\left(k\wedge n\right)k\right]\left(e^1-e^4\right)\in-K_1(n,m,i,j).
		\end{align*}
		For $t=2$, we define
		\begin{align*}
		\xi_2^{k}(n,m,i,j)&=\left[\left(k\wedge n\right)k\left(1-\frac{i}{n}\right)\right]\left(e^1-\frac{1}{1-\frac{i}{n}}e^4\right)\\
		&+\left[\left(k-\frac{m}{1+i}\wedge k\right)\left(1+i\right)\right]\left(e^1-\frac{1}{1+i}e^3\right)\in -K_2(n,m,i,j).
		\end{align*}
		Finally, at $t=3$ we liquidate the remaining positions in the assets $2$ and $3$. Thus, we define
		\begin{align*}
		\xi_3^{k}(n,m,i,j)&=\left[1+\frac{1}{4}+j\right]\left(e^1-\frac{1}{1+\frac{1}{4}+j}e^2\right)\\ &+ \left[\left(\frac{m}{1+i}\wedge k\right)\left(1+i\right)\left(1-\frac{j}{m}\right)\right]\left(e_1-\frac{1}{1+i}\frac{1}{1-\frac{j}{m}}e^3\right),
		\end{align*}
		which belongs to $-K_3(n,m,i,j)$. Thus $v^{k}=\xi_0^{k}+\xi_1^{k}+\xi_2^{k}+\xi_3^{k}$ belongs to $\mathcal{A}_0^T$ and we have
		\begin{align*}
		v^{k}(n,m,i,j)=\left[\frac{1}{4}+ik\left(1-\frac{n\wedge k}{n}\right)+j\left(1-\frac{1+i}{m}\left(\frac{m}{1+i}\wedge k\right)\right)\right]e^1.
		\end{align*}
		Finally, letting $k\to\infty$, we obtain $v\in \overline{\mathcal{A}_0^T}$ given by
		\begin{align*}
		v(n,m,i,j)=\lim\limits_{k\to\infty}v^{k}(n,m,i,j)=\frac{e^1}{4},
		\end{align*}
		which is the desired asymptotic arbitrage. Hence, the model cannot admit a 
		CPS (see Proposition~3.2.6. in \cite{kabanov2009markets}).

Next, we introduce the bid-ask process $(\widetilde{\Pi}_t)_{t=0}^3$ given by $\widetilde{\Pi}_0=\Pi_0$, $\widetilde{\Pi}_2=\Pi_2$, $\widetilde{\Pi}_3=\Pi_3$, and 
		\begin{align*}
		\widetilde{\Pi}_1\equiv
		\begin{pmatrix}
		1 & a & 1 & 1\\
		a & 1 & \cdot & \cdot\\
		a & \cdot & 1& \cdot\\
		1 & \cdot& \cdot & 1 
		\end{pmatrix},
		\end{align*}  
		which satisfies $\widetilde{\Pi}_t\leq \Pi_t$ for all $t=0,1,2,3$.  We want to show that $(\widetilde{\Pi}_t)_{t=0}^3$ has the (NA) property and satisfies $\mathcal{A}_0^t\cap -\widetilde{\mathcal{A}}_t^T\subseteq \widetilde{\mathcal{A}}_t^T$ for $t=0,1,2,3$.
		
{\em Step 2.} We start with the (NA) property for $(\widetilde{\Pi}_t)_{t=0}^T$. 
Let $\wt{v}\in \widetilde{\mathcal{A}}_0^3$ with $\wt{v}^i=0$ for $i=2,3,4$ and $\wt{v}^1\ge 0$ a.s. We have to show that this already implies $\wt{v}^1=0$ a.s.
We may pass to a $v\in\widetilde{\mathcal{A}}_0^3$ with 
$v^i=0$ for $i=2,3,4$ a.s. and $v^1\ge\wt{v}^1$ s.t. $v$ can be represented solely by transfers 
\beam\label{20.11.2018.1}
\lambda_t^{1j}\ \mbox{with}\ \pi_t^{1j}<a\quad
\mbox{and}\quad \lambda_t^{j1}\ \mbox{with}\ \pi_t^{j1}<a. 
\eeam
Indeed, purchasing an asset $i\in\{2,3,4\}$ at price~$a=5$ or short-selling it at price~$1/a=1/5$ (in terms of the asset~$1$) leads to a sure loss after liquidating this position afterwards. Hence, we only need to consider $v=\xi_0+\xi_1+\xi_2+\xi_3\in \widetilde{\mathcal{A}}_0^3$,
where $\xi_0\in\mathrm{cone}(e^2-e^1)$, $\xi_1\in L^0(\mathrm{cone}(e^3-e^1,e^4-e^1),\mathcal{F}_1)$, $\xi_2\in L^0(\mathrm{cone}(e^1-\widetilde{\pi}^{3,1}_2e^3, e^1-\widetilde{\pi}^{4,1}_2e^4),\mathcal{F}_2)$, and $\xi_3\in L^0(\mathrm{cone}(e^1-\widetilde{\pi}^{2,1}_3 e^2, e^1-\widetilde{\pi}^{3,1}_3 e^3),\mathcal{F}_3)$ with the additional restrictions $\xi^3_1+\xi^3_2\geq 0$ a.s. and $\xi^4_1+\xi^4_2=0$ a.s. Under the assumptions above, we get 
		\begin{align}\label{eq:Ex4.3LiquidationValue}
		0\leq v^1(n,m,i,j)\leq \xi_0^2\cdot(j+1/4)+i\cdot\left(\xi^3_1(n)-\frac{\xi_1^4(n)}{n}\right)+\left(\xi^3_1(n)+\xi^3_2(n,m,i)\right)\frac{-j}{m}(1+i)
		\end{align}
		for each state $(n,m,i,j)\in\Omega$ with $\xi_0^2\geq 0$, $\xi^3_1(n)\geq 0$, $\xi^3_2(n,m,i)\leq 0$ and $\xi^4_1(n)\geq 0$. Hereby the notation highlights the 
required measurability of the random variables. We have 
$\xi^3_1(n)+\xi^3_2(n,m,i)\le \xi_1^3(n)$, i.e., the investment in asset~3 between $t=2$ and $t=3$ is bounded from above by the $\mathcal{F}_1$-measurable random variable $\xi_1^3$. Consequently, the third summand in equation \eqref{eq:Ex4.3LiquidationValue} becomes arbitrarily small for large $m\in\mathbb{N}$. But, this implies that the sum of the first two terms, i.e., 
\beao
\xi_0^2\cdot(j+1/4)+i\cdot\left(\xi^3_1(n)-\frac{\xi_1^4(n)}{n}\right),
\eeao
that does not depend on $m$, has to be almost surely non-negative, which is only possible if \begin{align}\label{eq:Ex4.3HedgingRatio}
		\xi_0^2=0\quad \mathrm{and} \quad
		\xi_1^3(n)=\frac{\xi_1^4(n)}{n}\ \mathrm{for\ each}\ n\in\mathbb{N}.
		\end{align}  
		But then \eqref{eq:Ex4.3LiquidationValue} reduces to 
		\begin{align*}
		0\leq v^1(n,m,i,j)\leq\left(\xi^3_1(n)+\xi^3_2(n,m,i)\right)\frac{-j}{m}(1+i).
		\end{align*}
		Taking $j=1/2$, this implies $\xi^3_1(n)=-\xi_2^3(n,m,i)$ and, consequently, $v^1\equiv 0$. Hence $(\wt{\Pi}_t)_{t=0}^3$  satisfies (NA).

{\em Step 3.} Let us now show $\mathcal{A}_0^t\cap -\widetilde{\mathcal{A}}_t^T\subseteq \widetilde{\mathcal{A}}_t^T$ for $t=1,2,3,4$ (for $t=0$, there is nothing to show). This is akin to the proof in step~2. As in (\ref{20.11.2018.1}), we can restrict to portfolios which can be represented by transfers that do not trade at price~$a=5$. Indeed, since we only consider positions that can be liquidated for sure, the cancellation of a transfer at price~$a$ (with re-transfer at a later time to asset~1), would lead to a strict improvement and thus an arbitrage.
Since this would contradict to the (NA) property of $(\widetilde{\Pi}_t)_{t=0}^3$ shown in Step~2, we can exclude such silly trades in the following considerations.

By the arguments leading to (\ref{eq:Ex4.3HedgingRatio}), it follows that $e^2-e^1\not\in -\widetilde{\mathcal{A}}_1^3$. Consequently, we have $\mathcal{A}_0^1\cap -\widetilde{\mathcal{A}}_1^3=\mathrm{cone}(e^3-e^1,e^4-e^1)+ L^0(\mathrm{cone}(e^1-e^4),\mathcal{F}_1)\subseteq \widetilde{\mathcal{A}}_1^3$. 

Now consider the case $t=2$. Let $w\in \mathcal{A}_0^2\cap - \widetilde{\mathcal{A}}_2^3$, i.e. we may write $w=\xi_0+\xi_1+\xi_2= -\widetilde{\xi}_2-\widetilde{\xi}_3$ for $\xi_0\in \mathrm{cone}(e^2-e^1,e^3-e^1, e^4-e^1)$, $\xi_1\in L^0(\mathrm{cone}(e^4-e^1),\mathcal{F}_1)$, $\xi_2,\widetilde{\xi}_2\in L^0(\mathrm{cone}(e^1-\pi^{31}_2e^3, e^1-\pi^{41}_2e^4),\mathcal{F}_2)$ and $\widetilde{\xi}_3\in L^0(\mathrm{cone}(e^1-\pi^{21}e^2, e^1-\pi^{31}e^3),\mathcal{F}_3)$ with the restrictions $\xi_0^3+\xi_2^3+\widetilde{\xi}_2^3\geq 0$, $\xi_0^4+\xi_1^4\geq 0$ and $\xi_0^4+\xi_1^4+\xi_2^4+\widetilde{\xi}_2^4=0$. Indeed, this is a consequence of (NA) and the avoidance of silly trades. But, then we may consider $v:=w-w=\xi_0+\xi_1+\xi_2-(-\widetilde{\xi}_2-\widetilde{\xi}_3)$. Note that we have $v^i=0$ for $i=2,3,4$, which uniquely determines $v^1$ as 
	\begin{align*}
	 v^1(n,m,i,j)=\xi_0^2\cdot(j+1/4)+i\cdot\left(\xi^3_0-\frac{\xi_0^4+\xi_1^4(n)}{n}\right)+\left(\xi^3_0+\xi^3_2(n,m,i)+\widetilde{\xi}^3_2(n,m,i)\right)\frac{-j}{m}(1+i).
	\end{align*}
	On the other hand, we also have $v^1=0$. Since the first two terms do not depend on 
	$m\in\bbn$, we must have that
	\begin{align*}
\xi_0^2\cdot(j+1/4)+i\cdot\left(\xi^3_0-\frac{\xi_0^4+\xi_1^4(n)}{n}\right) = 0.
	\end{align*}
Considering $j=-1/2$ and $i=\pm 1/2$, $\xi^2_0\ge 0$ implies that	$\xi_0^2=0$ and 
$\xi_0^3-\left[\xi_0^4+\xi_1^4(n)\right]/n=0$ for all $n\in\bbn$. Hence, $\xi_0^4=-\xi_1^4(n)$ for all $n\in\mathbb{N}$ and $\xi_0^3=0$. Consequently, we also have $\xi_2^3(n,m,i)=\widetilde{\xi}_2^3(n,m,i)=0$. But, then we have shown $w=0$, which is tantamount to $\mathcal{A}_0^2\cap - \widetilde{\mathcal{A}}_2^3=\{0\}\subseteq\widetilde{\mathcal{A}}_2^3$.

The same arguments apply for $t=3$ and thus  $\mathcal{A}_0^3\cap-\widetilde{\mathcal{A}}_3^3=\{0\}\subseteq \widetilde{\mathcal{A}}_3^3$.
\end{Example}

\begin{Remark}\label{26.11.2018.1}
Example~\ref{example:Strengthofthecondition} also allows us to discuss the following related question: Does the existence of a bid-ask process $(\widehat{\Pi}_t)_{t=0}^T$ with $\widehat{\Pi}_t\leq\Pi_t$ a.s. for all $t=0,\ldots,T$ such that $(\widehat{\Pi}_t)_{t=0}^T$ satisfies (NA) and 
	\begin{align}\label{eq:AdjustedImplicationNAr}
	\sum_{t=0}^{T}\xi_t=0\ \mathrm{a.s.\ with}\ \xi_t\in L^0(-K_t,\mathcal{F}_t)\Rightarrow \xi_t\in L^0(\wh{K}^0_t,\mathcal{F}_t)\ \mbox{for all\ } t=0,\ldots,T
	\end{align} 
	already imply the absence of an approximate arbitrage, i.e. $\overline{\mathcal{A}_0^T}\cap L^0(\mathbb{R}^d_+)=\{0\}$? Hereby \eqref{eq:AdjustedImplicationNAr} means that each transaction which is involved in a null-strategy in the original model is carried out at frictionless prices in the adjusted market. 
	
It turns out that the answer to the question is negative. Indeed, in the setting of 
Example~\ref{example:Strengthofthecondition}, we define the adjusted bid-ask process $(\widehat{\Pi}_t)_{t=0}^3$ by 
	\begin{align*}
	\widehat{\Pi}_0=
	\begin{pmatrix}
	1 & 1 & 1& 1\\
	a & 1 & \cdot & \cdot\\
	a & \cdot & 1& \cdot\\
	1 & \cdot& \cdot & 1 
	\end{pmatrix}, \quad
	\widehat{\Pi}_1=
	\begin{pmatrix}
	1 & a & a & 1\\
	a & 1 & \cdot & \cdot\\
	a & \cdot & 1& \cdot\\
	1 & \cdot& \cdot & 1 
	\end{pmatrix},
	\end{align*}
	$\widehat{\Pi}_2=\Pi_2$ and $\widehat{\Pi}_3=\Pi_3$. Then, we have $\widehat{\Pi}_t\leq\Pi_t$ a.s. for all $t=0,1,2,3$. In addition, the previously considered adjusted bid-ask process $(\widetilde{\Pi}_t)_{t=0}^3$ satisfies (NA) and yields better terms of trade than $(\widehat{\Pi}_t)_{t=0}^3$, i.e., $\widehat{\mathcal{A}}_0^T\subseteq\widetilde{\mathcal{A}}_0^T$. Consequently, $(\widehat{\Pi}_t)_{t=0}^3$ inherits the (NA) property. In addition, the only (up to multiplication with non-negative scalars) null strategy in the original market is $(\vt_t-\vt_{t-1})_{t=0}^3=(e^4-e^1,e^1-e^4,0,0)$. Thus, condition \eqref{eq:AdjustedImplicationNAr} is satisfied, but, as we have shown, $\overline{\mathcal{A}_0^T}\cap L^0(\mathbb{R}^d_+)\supsetneq\{0\}$.
	
The key observation is that the bid-ask process~$(\widehat{\Pi}_t)_{t=0}^3$ satisfies  
(\ref{eq:AdjustedImplicationNAr}) but not (\ref{eq:NArImplication}) since by the extension of the market from $(\Pi_t)_{t=0}^3$ to $(\widehat{\Pi}_t)_{t=0}^3$
new null-strategies occur. Indeed, it was a finding by 
Jacka, Berkaoui, and Warren~\cite{jacka2008no} (see the paragraph before Definition~3.2 therein) that it is not sufficient that 
the null-strategies in the original market are frictionless in the market extended by their ``adjusted trading prices'', i.e., it is not sufficient that (\ref{eq:AdjustedImplicationNAr}) is satisfied. However, in their Example~3.3 with $T=1$, a price adjustment~$\wh{\Pi}$
such that \eqref{eq:AdjustedImplicationNAr} holds true turns out to be sufficient 
to obtain a closed set of attainable portfolio values
(indeed, in Example~3.3 of \cite{jacka2008no}, 
the price adjustment essentially consists of adding $L^0(\mathrm{cone}(e^1-e^2),\mathcal{F}_1)$, which
is already necessary to make the null-strategy~$(e^1-e^2,e^2-e^1)=(e^2-e^1, e^4/\omega-e^1+e^3/\omega-e^4/\omega+e^2-e^3/\omega)$ of the original market frictionless in the extended market). 
On the other hand, Example~\ref{example:Strengthofthecondition} highlights 
why in the approach of \cite{jacka2008no} it would not be sufficient to 
ensure that  the null-strategies in the original market are frictionless in the extended market.
\end{Remark}

\bibliography{LiteraturNAp}

\begin{thebibliography}{DVKS07}

\bibitem[AB06]{aliprantis2006infinite}
C.~Aliprantis and K.~Border.
\newblock {\em Infinite Dimensional Analysis: A Hitchhiker's Guide}.
\newblock Springer-Verlag, 2006.

\bibitem[Bou06]{bouchard2006no}
B.~Bouchard.
\newblock No-arbitrage in discrete-time markets with proportional transaction
  costs and general information structure.
\newblock {\em Finance and Stochastics}, 10(2):276--297, 2006.

\bibitem[DMW90]{dalang1990equivalent}
R.~Dalang, A.~Morton, and W.~Willinger.
\newblock Equivalent martingale measures and no-arbitrage in stochastic
  securities market models.
\newblock {\em Stochastics: An International Journal of Probability and
  Stochastic Processes}, 29(2):185--201, 1990.

\bibitem[DS94]{delbaen1994general}
F.~Delbaen and W.~Schachermayer.
\newblock A general version of the fundamental theorem of asset pricing.
\newblock {\em Mathematische annalen}, 300(1):463--520, 1994.

\bibitem[DS06]{delbaen2006mathematics}
F.~Delbaen and W.~Schachermayer.
\newblock {\em The mathematics of arbitrage}.
\newblock Springer-Verlag, 2006.

\bibitem[DVKS07]{de2007no}
D.~De~Valli{\`e}re, Y.~Kabanov, and C.~Stricker.
\newblock No-arbitrage criteria for financial markets with transaction costs
  and incomplete information.
\newblock {\em Finance and Stochastics}, 11(2):237--251, 2007.

\bibitem[Gri05]{grigoriev2005low}
P.~Grigoriev.
\newblock On low dimensional case in the fundamental asset pricing theorem with
  transaction costs.
\newblock {\em Statistics \& Decisions}, 23(1):33--48, 2005.

\bibitem[HP81]{harrison1981martingales}
J.~Harrison and S.~Pliska.
\newblock Martingales and stochastic integrals in the theory of continuous
  trading.
\newblock {\em Stochastic processes and their applications}, 11(3):215--260,
  1981.

\bibitem[JBW08]{jacka2008no}
S.~Jacka, A.~Berkaoui, and J.~Warren.
\newblock No arbitrage and closure results for trading cones with transaction
  costs.
\newblock {\em Finance and Stochastics}, 12(4):583--600, 2008.

\bibitem[KRS02]{Kabanov2002}
Y.~Kabanov, M.~R{\'a}sonyi, and C.~Stricker.
\newblock No-arbitrage criteria for financial markets with efficient friction.
\newblock {\em Finance and Stochastics}, 6(3):371--382, 2002.

\bibitem[KRS03]{kabanov2003closedness}
Y.~Kabanov, M.~R{\'a}sonyi, and C.~Stricker.
\newblock On the closedness of sums of convex cones in {$L^0$} and the robust
  no-arbitrage property.
\newblock {\em Finance and Stochastics}, 7(3):403--411, 2003.

\bibitem[KS99]{kramkov1999}
D.~Kramkov and W.~Schachermayer.
\newblock The asymptotic elasticity of utility functions and optimal investment
  in incomplete markets.
\newblock {\em Ann. Appl. Probab.}, 9(3):904--950, 1999.

\bibitem[KS01a]{kabanov2001teacher}
Y.~Kabanov and C.~Sticker.
\newblock A teacher’s note on no-arbitrage criteria.
\newblock In {\em S{\'e}minaire de Probabilit{\'e}s XXXV}, pages 149--152.
  Springer, 2001.

\bibitem[KS01b]{kabanov2001harrison}
Y.~Kabanov and C.~Stricker.
\newblock The {H}arrison--{P}liska arbitrage pricing theorem under transaction
  costs.
\newblock {\em Journal of Mathematical Economics}, 35(2):185--196, 2001.

\bibitem[KS09]{kabanov2009markets}
Y.~Kabanov and M.~Safarian.
\newblock {\em Markets with transaction costs: Mathematical Theory}.
\newblock Springer-Verlag, 2009.

\bibitem[K{\"u}h18]{kuhn2018local}
C.~K{\"u}hn.
\newblock How local in time is the no-arbitrage property under capital gains
  taxes?
\newblock {\em To appear in Mathematics and Financial Economics}, 2018.

\bibitem[LZ]{jun.lepinette}
E.~L{\'{e}}pinette and J.~Zhao.
\newblock A complement to the {G}rigoriev theorem for the {K}abanov model.
\newblock {\em Preprint}.

\bibitem[Pen01]{Penner}
I.~Penner.
\newblock {Arbitragefreiheit in Finanzm\"arkten mit Transaktionskosten}.
\newblock Diplomarbeit, Humboldt-Universität zu Berlin, 2001.

\bibitem[Rok08]{rokhlin2008constructive}
D.~Rokhlin.
\newblock Constructive no-arbitrage criterion under transaction costs in the
  case of finite discrete time.
\newblock {\em Theory of Probability \& Its Applications}, 52(1):93--107, 2008.

\bibitem[RW09]{rockafellar2009variational}
R.~Rockafellar and R.~Wets.
\newblock {\em Variational analysis}, volume 317.
\newblock Springer-Verlag, 2009.

\bibitem[Sch92]{schachermayer1992hilbert}
W.~Schachermayer.
\newblock A {H}ilbert space proof of the fundamental theorem of asset pricing
  in finite discrete time.
\newblock {\em Insurance: Mathematics and Economics}, 11(4):249--257, 1992.

\bibitem[Sch04]{schachermayer2004fundamental}
W.~Schachermayer.
\newblock The fundamental theorem of asset pricing under proportional
  transaction costs in finite discrete time.
\newblock {\em Mathematical Finance}, 14(1):19--48, 2004.

\end{thebibliography}
\end{document}